
\documentclass{article}

\usepackage{microtype}
\usepackage{graphicx}
\usepackage{subfigure}
\usepackage{booktabs} 

\usepackage{hyperref}
\usepackage{comment}



\usepackage[accepted]{icml2024}

\usepackage{amsmath}
\usepackage{amssymb}
\usepackage{mathtools}
\usepackage{amsthm}

\usepackage[capitalize,noabbrev]{cleveref}

\theoremstyle{plain}
\newtheorem{theorem}{Theorem}[section]

\newtheorem{lemma}[theorem]{Lemma}

\theoremstyle{definition}
\newtheorem{definition}[theorem]{Definition}

\theoremstyle{remark}

\usepackage[disable,textsize=tiny]{todonotes}

\usepackage{graphicx}
\usepackage{wrapfig}
\usepackage{soul}
\usepackage{algorithmic}
\usepackage[parfill]{parskip}
\usepackage{dsfont}

\newcommand{\R}{{\mathbb{R}}}

\newcommand{\mC}{\mathcal{C}}
\newcommand{\mA}{\mathcal{A}}

\newcommand{\norm}[1]{\|{#1}\|}

\newcommand{\eat}[1]{}

\newtheorem{claim}{Claim}
\newtheorem{problem}{Problem}

\renewcommand{\Pr}[1]{\mathrm{Pr}\left[ {#1} \right]}
\newcommand{\Ex}[1]{\mathbb{E}\left[ {#1} \right]}
\newcommand{\IND}{\mathds{1}}
\newcommand{\ind}[1]{{\IND \left\{ #1 \right\}}}
\newcommand{\EqComment}[1]{\text{\emph{(#1)}}}

\newcommand{\rbr}[1]{\left(\,#1\,\right)}
\newcommand{\abs}[1]{\left|#1\right|}

\newcommand{\cbr}[1]{\left\{\,#1\,\right\}}

\newcommand{\ceil}[1]{\left\lceil #1 \right\rceil}

\newcommand{\cut}{\texttt{cut}}

\usepackage{color-edits}
\addauthor{kb}{magenta}
\addauthor{ms}{magenta}
\addauthor{jo}{green}

\newcommand{\wdiam}{\textnormal{wdiam}}
\newcommand{\distort}{\textnormal{distort}}
\newcommand{\contr}{\textnormal{contr}}
\newcommand{\expans}{\textnormal{expans}}


\icmltitlerunning{Dynamic Metric Embedding into $\ell_p$ Space}

\begin{document}

\twocolumn[
\icmltitle{Dynamic Metric Embedding into $\ell_p$ Space}



\icmlsetsymbol{equal}{*}

\begin{icmlauthorlist}
\icmlauthor{Kiarash Banihashem}{umdcs}
\icmlauthor{MohammadTaghi Hajiaghayi}{umdcs}
\icmlauthor{Dariusz R. Kowalski}{aug}
\icmlauthor{Jan Olkowski}{umdcs}
\icmlauthor{Max Springer}{umdmath}
\end{icmlauthorlist}

\icmlaffiliation{umdcs}{Department of Computer Science, University of Maryland, College Park, USA}
\icmlaffiliation{umdmath}{Department of Mathematics, University of Maryland, College Park, USA}
\icmlaffiliation{aug}{School of Computer and Cyber Sciences, Augusta University, Georgia, USA}

\icmlcorrespondingauthor{Max Springer}{mss423@umd.edu}

\icmlkeywords{Machine Learning, ICML}

\vskip 0.3in
]



\printAffiliationsAndNotice{} 

\begin{abstract}
    We give the first non-trivial decremental dynamic embedding 
    of a weighted, undirected graph $G$ into $\ell_p$ space.
    Given a weighted graph $G$ undergoing a sequence of edge weight increases,
    the goal of this problem is to maintain a (randomized) mapping
    $\phi: (G,d) \to (X,\ell_p)$
    from the set of vertices of the graph
    to the $\ell_p$ space such that
    for every pair of vertices $u$ and $v$,
    the expected distance between $\phi(u)$ and $\phi(v)$ in the $\ell_p$ metric
    is within a small multiplicative factor, referred to as the
    \emph{distortion}, of their distance in $G$.
    Our main result is a dynamic algorithm with expected distortion $O(\log^3 n)$
    and total update time $O\left((m^{1+o(1)} \log^2 W + Q \log n)\log(nW) \right)$, where $W$ is the
    maximum weight of the edges, $Q$ is the total number of updates and $n, m$ denote the number of vertices and edges in $G$ respectively.
    This is the first result of its kind, extending the seminal
    result of Bourgain \cite{bourgain1985lipschitz} to the growing field of dynamic algorithms.
    Moreover, we demonstrate that in the fully dynamic regime, where we
    tolerate edge insertions as well as deletions, no algorithm can
    explicitly maintain an embedding into $\ell_p$ space that has a low distortion with high probability.
    \footnote{
    An earlier version of this paper 
    claimed an expceted distortion bound of $O(\log^2 n)$ and a total update time of
    $O((m^{1+ o(1)} \log^2 W + Q) \log(nW))$. The new version obtains the slightly worse bounds of 
    $O(\log^3 n)$ and
    $O((m^{1+ o(1)} \log^2 W + Q\log n) \log(nW))$ respectively.
    }
\end{abstract}

\section{Introduction}

A low distortion embedding between two metric spaces, $M = (X,d)$ and $M' =
(X', d')$, is a 
mapping $f$ such that for every pair of points $x,y \in X$ we have 
\begin{align*}
  d(x,y) \le d'(f(x),f(y)) \le C \cdot d(x,y) \ ,
\end{align*}
where $C$ is often referred to as the
\emph{distortion} of such an embedding. Low-distortion embeddings have been extensively employed to simplify graph theoretic problems prevalent in the algorithm design literature \cite{indyk2001algorithmic}. This effectiveness stems primarily from the ability to represent any graph $G$ using a metric, wherein distances correspond to the shortest paths between two nodes. However, computing numerous graph properties within such a metric is inherently challenging. Thus, by first embedding the graph into an ``easy" metric, we can facilitate simplified problem-solving, albeit with an approximation factor determined by the distortion introduced by the embedding. For example, approximation algorithms for the sparsest cut \cite{linial1995geometry}, bandwidth \cite{blum1998semi} and buy-at-bulk \cite{awerbuch1997buy} graph problems leverage embeddings into low-distortion metric spaces to obtain their near-optimal guarantees.

In the present work, we investigate fundamental embedding problems in the
\emph{dynamic} setting, where the input graph $G$ is subject to modification at each iteration by an adversary. Specifically, we address the following question:

\begin{problem}
    Is it possible to embed any graph $G$, undergoing a dynamic sequence of edge updates, into Euclidean (and more broadly the $\ell_p$-metric) space with minimal distortion of the underlying metric's pairwise distances?
\end{problem}

Unsurprisingly, the use of randomization is essential in demonstrating that
such a data structure is indeed attainable. Most notably, we build upon the
fundamental building blocks of Bourgain, Johnson and Lindenstrauss in further
demonstrating the power of randomized decompositions of a graph to efficiently
map such an input, undergoing dynamic updates, into Euclidean space for ease of computation with only polylogarithmic expected distortion (see Section~\ref{sec:prelim} for the formal definition) of the original distances between nodes of $G$. These are the first results of their kind in the dynamic input setting.

\subsection{Motivation}

\textbf{Metric Embedding.} From a mathematical perspective, embeddings of finite metric spaces into normed spaces is a natural extension on the local theory of Banach spaces \cite{matousek02}. The goal of this area of research is to devise mappings, $f$, that preserve pairwise distances up to an additive or multiplicative \emph{distortion}. In tandem to ensuring this metric is not too heavily distorted, we also seek to ensure that the resulting embedding of a point in the original space has \emph{low-dimension} (i.e. can be represented by small number of coordinates) to ensure the representation is spacially efficient. 

Within this problem framework, the classic question is that of embedding metric spaces into \emph{Hilbert space}. Considerable literature has investigated embeddings into $\ell_p$ normed spaces (see the survey \cite{abraham2006advances} for a comprehensive overview of the main results). Most crucially, the cornerstone of the field is the following theorem by Bourgain in 1985:

\begin{theorem}[\cite{bourgain1985lipschitz}]
    For every $n$-point metric space, there exists an embedding into Euclidean space with distortion $O(\log n)$.
\end{theorem}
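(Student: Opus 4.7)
The plan is to use the classical Fréchet-style construction introduced by Bourgain. For each scale $i \in \{0,1,\ldots,\lceil \log_2 n \rceil\}$ and each repetition $j \in \{1,\ldots,\Theta(\log n)\}$, I would sample a random set $A_{i,j} \subseteq X$ by including each point of $X$ independently with probability $2^{-i}$, and take the coordinate map $\phi_{i,j}(x) = d(x, A_{i,j})$. The concatenation gives an embedding $\phi: X \to \R^{O(\log^2 n)}$. The expansion side is immediate: since each $\phi_{i,j}$ is $1$-Lipschitz, the deterministic bound $\norm{\phi(u)-\phi(v)}_2 \le \sqrt{O(\log^2 n)}\cdot d(u,v) = O(\log n)\cdot d(u,v)$ holds for every realisation of the samples.

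The contraction side is the substantive half. Fix $u,v$ with $\delta = d(u,v)$ and define a nested sequence of radii $r_t$ by letting $r_t$ be the smallest value for which both $|B(u, r_t)| \ge 2^t$ and $|B(v, r_t)| \ge 2^t$, capped at a constant fraction of $\delta$ so that the relevant balls around $u$ and $v$ remain disjoint. The key probabilistic lemma is that at scale $i$ the random set $A_{i,j}$ meets one of $B(u,r_{i-1}), B(v,r_{i-1})$ (each of size $\ge 2^{i-1}$, hence hit with probability $\ge 1-e^{-1/2}$) while avoiding the opposite ball $B(\cdot,r_i)$ (of size $< 2^i$, hence missed with probability $\ge e^{-1}$), and the two events can be arranged to occur jointly with constant overall probability. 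On that event $|\phi_{i,j}(u)-\phi_{i,j}(v)| \ge r_i - r_{i-1}$, so $\Ex{(\phi_{i,j}(u)-\phi_{i,j}(v))^2} = \Omega((r_i-r_{i-1})^2)$. Summing over $(i,j)$ and invoking Cauchy--Schwarz on the telescoping identity $\sum_i (r_i - r_{i-1}) = \Omega(\delta)$ gives
\begin{align*}
\Ex{\norm{\phi(u)-\phi(v)}_2^2} \ge \Omega(\log n)\sum_i (r_i-r_{i-1})^2 \ge \Omega(\delta^2).
\end{align*}

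The final step is to boost this expectation into a lower bound that holds for all pairs simultaneously. The $\Theta(\log n)$ independent repetitions at each scale, together with the deterministic cap $(\phi_{i,j}(u)-\phi_{i,j}(v))^2 \in [0,\delta^2]$, allow a Hoeffding / Paley--Zygmund-style concentration argument yielding $\norm{\phi(u)-\phi(v)}_2 \ge \Omega(d(u,v))$ with probability at least $1-n^{-\Omega(1)}$, after which a union bound over the $\binom{n}{2}$ pairs produces a single embedding realising distortion $O(\log n)$. The main obstacle is the ball-separation lemma: the radii $r_t$ are defined by whichever of the two centres has the smaller ball, so one must run a careful case analysis to certify that a random subset of density $2^{-i}$ simultaneously hits an inner ball and misses the opposite outer ball with constant probability despite the possible asymmetry between $u$ and $v$.
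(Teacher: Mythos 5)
The paper does not prove this statement; it is quoted as a classical result of Bourgain~\cite{bourgain1985lipschitz} (in the dimension-$O(\log^2 n)$, all-$\ell_p$ form also credited to Linial, London, and Rabinovich~\cite{linial1995geometry}). Your sketch is exactly the standard Linial--London--Rabinovich presentation of Bourgain's argument: Fr\'echet coordinates $d(\cdot, A_{i,j})$ for random density-$2^{-i}$ subsets, the ``hit one inner ball, miss the opposite outer ball'' lemma at each scale, a Cauchy--Schwarz step over the telescoping radii $r_t$, and a concentration-plus-union-bound step to pass from expectation to a single good realisation. It is substantively correct, and the two issues you flag at the end -- the case analysis over which of $u$ or $v$ is the bottleneck centre that defines $r_t$, and the need to boost the per-pair expectation bound to an all-pairs guarantee -- are indeed the points where a full proof has to be careful.

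Worth noting, since you wrote this blind: the paper's own static construction (Theorem~\ref{thm:static}, achieving the weaker $O(\log^3 n)$ bound) deliberately does \emph{not} use the Fr\'echet approach your proof uses. The paper explicitly points out that Fr\'echet coordinates $\rho_i(v) = d_G(v, S_i)$ are unsuitable for its goal because the expansion bound leans on the exact triangle inequality $\abs{d_G(u,S_i) - d_G(v,S_i)} \le d_G(u,v)$, and under dynamic updates distances can only be maintained up to a $(1+\epsilon)$ factor, which breaks that step. Instead, the paper builds \emph{characteristic} embeddings from random $(\beta, R, \epsilon)$-distance-preserving cuts, setting $\rho_i(v) = R_i \cdot \ind{v \in S_i}$, and bounds expansion through the cut probabilities $\Pr{\cut_{u,v}(S_i)}$ rather than through Lipschitz continuity of a distance function. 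Your proof gives the sharper $O(\log n)$ distortion of the cited theorem, but the paper's approach is chosen precisely because it sacrifices two logarithmic factors to obtain a construction that can be maintained decrementally.
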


This landmark result is foundational in the theory of embedding into finite metric spaces. Moreover, it was further shown in \cite{linial1995geometry} that Bourgain's embedding yields an embebdding into \emph{any} $\ell_p$-metric with distorition $O(\log n)$ and dimension $O(\log^2 n)$ -- demonstrating a highly efficient and precise algorithm.

We highlight that the above results are of immense value to the field of computer science in the age of big data where the construction of appropriately sized data structures is no longer efficient (or even feasible). For instance, in the field of social media analysis, processing billions of daily tweets to identify trending topics and sentiment analysis would require impractical amounts of storage and computational resources without dimension reduction techniques like topic modeling algorithms \cite{church2017word2vec,subercaze2015metric}. It is thus essential to reduce these inputs to a more approachable metric space to prevent computational bottle-necking. 
In this paper, we present the first extension on these seminal tools to the emerging domain of \emph{dynamic algorithms}. Specifically, we maintain a polylogarithmic (expected) distortion embedding into the $\ell_p$-metric through a sequence of updates to the input graph.

\textbf{Dynamic Algorithm.} A dynamic graph algorithm is a data structure that
supports edge insertions, edge deletions, and can answer queries on certain properties of the input with
respect to the original space's metrics. While trivially one can run a
static algorithm on the graph after each update and rebuild a structure equipped to answer queries, the now large body of work on
dynamic algorithms works to devise solutions with considerably faster update
and query times. In the present work, we maintain a dynamic data
structure that both reduces the dimension of the input for ease of computation and exhibits only a modest expansion of the original metric's pairwise distances in expectation.

Similar to the fundamental motivation underlying metric embeddings, the emergence of big data has intensified the need for dynamic algorithms capable of efficiently storing representations of massive input graphs, while promptly adapting to any changes that may occur on a variety of machine learning and optimization problems \cite{bhattacharya2022efficient,dutting2023fully}.
As an illustrative example, consider the problem of maintaining connectivity information in a large graph that undergoes edge insertions and deletions -- an essential problem in the field of route planning and navigation. In a static scenario, the solution can be trivially achieved by rebuilding the shortest paths between nodes using Djikstra's algorithm on every source vertex after each update to the graph. However, it is easy to see that for connectivity graphs employed in big data systems, this procedure quickly becomes intractable. Recent advancements in the field of dynamic algorithms have revealed that it is possible to maintain such connectivity information with considerably less overhead in terms of the update time to the data structure without a large loss of accuracy for the paths \cite{bernstein2009fully,roditty2004dynamic,roditty2012dynamic}.
This capacity to adapt data structures to effectively handle diverse queries is rapidly transitioning from being merely helpful to absolutely essential. Building upon this existing body of literature, we present a novel contribution by developing a dynamic embedding structure tailored to capturing the distances between nodes in a graph, specifically within the context of the simplified $\ell_p$ metric -- a highly useful computation in the field of dimension reduction for big data. Importantly, our approach guarantees a polylogarithmic expected distortion, thereby striking a balance between efficiency and accuracy.

\subsection{Our Results}

We first explore the \emph{decremental} setting, where edge weights can only increase dynamically (i.e., nodes move further apart); this is the setting under which our primary algorithmic contributions are effective. 
For the \emph{fully} dynamic setting which allows both increases and decreases in edge weights, we show a partial negative result proving that
maintaining an embedding into the $\ell_p$-metric explicitly that has low distortion with high probability is not feasible. 
Here \emph{explicitly} maintaining an embedding means that the entire embedding is updated efficiently, rather just reporting any changes to the data structure (see Section~\ref{sec:prelim} for a more precise definition of these problem regimes).

\begin{theorem}
    There is no fully dynamic algorithm that can explicitly maintain a dynamic embedding into $\ell_p$ space with high probability. 
\end{theorem}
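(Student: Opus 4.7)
The plan is to use an adaptive adversary argument, exploiting the fact that ``explicit'' maintenance forces the algorithm to publish every vertex coordinate $\phi_t(v) \in \ell_p^d$ after each update, thereby revealing its random bits to any adversary reading the output. Once the embedding is observable, the adversary can treat the algorithm as effectively deterministic and issue updates tailored to the currently published coordinates.

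First I would prove a single-round lemma: for every embedding $\phi : V \to \ell_p^d$ and every graph $G$ for which $\phi$ achieves distortion at most $D = \mathrm{polylog}(n)$, there is at least one edge update (insertion or deletion) producing a graph $G'$ on which $\phi$ no longer achieves distortion $D$. The idea is to pick a pair $(u,v)$ whose embedded distance $\|\phi(u)-\phi(v)\|_p$ is either much larger or much smaller than the scale dictated by $G$, then issue an update that shortcuts or severs the short path between $u$ and $v$ so that $d_{G'}(u,v)$ contradicts $\|\phi(u)-\phi(v)\|_p$ by more than a factor of $D$. A counting comparison will work here: the metrics reachable by one edge update form a combinatorially rich family, while the metrics jointly compatible with the fixed scalars $\{\|\phi(u)-\phi(v)\|_p\}_{u,v}$ under distortion $D$ are severely restricted.

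Next I would amplify into a multi-round attack: the adversary runs for $T = \mathrm{poly}(n)$ rounds, at each round reading the currently published embedding $\phi_t$ and issuing the update prescribed by the single-round lemma. Because the embedding is entirely observable, the adversary's strategy depends on every realized random bit of the algorithm, and the per-round failure event is driven by the deterministic lemma rather than by any residual randomness. A chaining argument across the $T$ rounds then drives the probability that none of the rounds produces a distortion violation below any inverse-polynomial threshold, contradicting high-probability low-distortion maintenance.

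The main obstacle is making the amplification robust against an algorithm that re-randomizes between rounds in an attempt to camouflage its embedding. I would address this by letting the adversary pick each update as a function of the entire published history $(\phi_0, \phi_1, \ldots, \phi_t)$, and by arguing that the cumulative collection of graphs visited in any execution is not simultaneously embeddable with distortion $D$ in $\ell_p^d$. This is exactly where the fully dynamic regime is essential: the ability to both insert and delete edges gives the adversary the combinatorial freedom to force arbitrary transitions between incompatible metrics, whereas in the decremental setting the monotone growth of distances severely constrains such attacks and permits the positive result we prove for that regime.
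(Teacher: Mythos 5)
Your approach tries to prove a different claim than the paper does, and the claim you are trying to prove is in fact false. You want to show that a high-probability low-distortion embedding cannot be maintained at all, by exhibiting a (single-round, then chained) update on which the \emph{currently published} embedding $\phi_t$ has bad distortion. But the algorithm is free to output a brand-new $\phi_{t+1}$ after the update; nothing prevents it from recomputing a fresh Bourgain embedding from scratch every round, which always exists with distortion $O(\log n)$. So no chaining argument can force a distortion violation, and your proposed contradiction does not materialize. The correct content of this theorem, made explicit in the formal statement that follows it in the paper, is an \emph{update-time} lower bound: any algorithm that does maintain low distortion must spend $\Omega(n)$ time per update. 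The word ``explicitly'' in the informal statement is doing exactly this work --- the cost being lower-bounded is the size of the published \emph{change set} $\{v : \phi_t(v) \ne \phi_{t-1}(v)\}$, not the algorithm's ability to find a good embedding.

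The paper's construction is also far simpler and does not need an adaptive adversary (which would in any case be inconsistent with the paper's stated oblivious-adversary model). Take two disjoint cliques $H, H'$ of size $n$ joined by a single edge $(v,v')$ whose weight alternates between $1$ and $W$. When the weight is $1$, all $\Omega(n^2)$ cross pairs are at graph distance $\le 3$; when it is $W$, they are all at distance $\ge W$. In $\ell_p$, a set of pairwise distances can only change by this much if $\Omega(n)$ of the points themselves move: each moved vertex fixes at most $O(n)$ of the $\Omega(n^2)$ violated pairs. Hence every flip of that one edge weight forces $\Omega(n)$ coordinate updates, and the sequence can be fixed obliviously in advance. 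Your ``counting comparison'' between metrics reachable by one update versus metrics compatible with a fixed $\phi$ gestures at the same geometric rigidity of $\ell_p$, but it is aimed at the wrong target (distortion failure rather than update cost) and needs the much heavier and unavailable adaptive machinery. I would drop the single-round lemma and chaining entirely and replace them with the clique-pair argument, which delivers the quantitative $\Omega(n)$ bound directly.
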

Though computation is efficient in the target space, we demonstrate that an adversarially selected sequence of updates to the graph can force an update of the embedding for $\Omega(n)$ nodes in each step which becomes intractable to maintain. Intuitively, this result is derived from the fact that changing a large number of pairwise distances in the $\ell_p$ metric is only possible by moving a large number of points,
while making a similar change in the input graph can be done easily by, essentially, connecting and disconnecting two components.
We expand more formally on this result in Section~\ref{sec:lower_bound}.

The main idea underpinning our primary algorithmic result is a novel combination of the static
randomized decomposition of a graph (as utilized by Bourgain) with a
decremental clustering algorithm to maintain an embedding
into $\ell_p$ space that exhibits $O(\log^{3} n)$ expected distortion and can answer distance queries with polylogarithmic update time. Our
algorithmic result is stated formally as follows. \todo{Due to the space constraints, maybe its unwise to print the full theorem statements twice since they are quite long?}
\begin{theorem}\label{thm:dyn-emb}
For every graph $G$ with max edge weight $W$ and a metric $\ell_{p}$, there is a decremental dynamic algorithm that maintains an embedding, $\rho : V \rightarrow \R^{O(\log(nW))}$, for the metric induced by the dynamically updated graph $G$ into $\ell_{p}$ space of dimension $O(\log(nW))$ that has expected (over the internal randomization of the algorithm) distortion
at most $O(\log^{3}{n})$ and its running time is at most $O\left((m^{1+o(1)} \log^2 W + Q \log n)\log(nW) \right)$ with high probability\footnote{Throughout the paper, we say that an event holds with high probability (whp for short), if its probability is at least $1-n^{-a}$ for some absolute constant $a$.}, where $Q$ denotes the total number of updates. Within this running time, the algorithm explicitly outputs all changes to the embedding and can answer distance queries between pair of vertices in $O(\log(nW))$ time.
\end{theorem}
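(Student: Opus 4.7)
The plan is to implement a Bourgain-style embedding in the decremental setting by realising each of the $O(\log(nW))$ coordinates of $\rho$ as a decremental approximate single-source shortest-path (SSSP) data structure rooted at a randomly sampled ``super-source.'' For each coordinate $i$, I would sample a subset $A_i \subseteq V$ by including every vertex independently with a scale-dependent probability $p_i$ drawn from a geometric sequence $\{2^{-1}, 2^{-2}, \dots\}$; then set $\rho_i(v) = \tilde d(v, A_i)$, where $\tilde d$ is the $(1+\varepsilon)$-approximate distance from a virtual super-source linked to all of $A_i$ by zero-weight edges, maintained by a black-box decremental SSSP oracle. This is sound because edge weights only ever increase throughout the update sequence, so the super-source-to-vertex distances only ever increase as well.

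For the running time, each super-source SSSP instance can be maintained with total update time $m^{1+o(1)} \log W$ via a state-of-the-art decremental $(1+\varepsilon)$-approximate SSSP algorithm; summing over the $O(\log(nW))$ coordinates and absorbing the oracle's own $\log W$ overhead produces the first term $m^{1+o(1)} \log^2 W \cdot \log(nW)$. The additive $Q \log n$ summand tracks the cost of propagating reported distance changes into the explicit embedding through a priority queue keyed by vertex, with each change processed in $O(\log n)$ amortised time. A single distance query then merely computes the $\ell_p$ distance between two $O(\log(nW))$-dimensional vectors, taking the claimed $O(\log(nW))$ time.

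The distortion analysis reuses Bourgain's proof in a form robust to multiplicative approximation. Expansion is immediate coordinate-wise from $|d(u,A_i) - d(v,A_i)| \le d(u,v)$ together with the triangle inequality, yielding $\|\rho(u)-\rho(v)\|_p \le O((\log(nW))^{1/p})\, d(u,v)$ up to the $(1+\varepsilon)$ oracle error. For contraction, I would reprove Bourgain's ``good scale'' argument: for every pair $(u,v)$, at some scale a random $A_i$ separates $u$ and $v$ in approximate distance by $\Omega(d(u,v)/\log n)$ with constant probability. Summing the per-coordinate contribution in expectation, and absorbing the oracle's multiplicative error by choosing $\varepsilon = 1/\polylog(n)$, would yield the claimed $O(\log^3 n)$ expected distortion; the extra $\log^2 n$ factor over Bourgain's native $O(\log n)$ arises from (i) using only $O(\log(nW))$ total samples rather than Bourgain's $O(\log^2 n)$, and (ii) propagating the $(1+\varepsilon)$ oracle error through the contraction bound.

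The main obstacle will be carrying Bourgain's sharp ``good scale'' argument through a decremental, approximate oracle uniformly across every pair of vertices, every scale, and every intermediate graph in the update sequence: classically the argument exploits exact distances to produce a discrete drop in $d(v, A_i)$ at the critical radius where $A_i$ first meets a ball around $v$, and under a $(1+\varepsilon)$ oracle this drop must be shown to survive up to absolute constants. A secondary challenge is the bookkeeping for explicit updates — I will need to bound the total number of per-vertex coordinate changes emitted by the $O(\log(nW))$ decremental oracles by $\tilde O(n \log(nW) + Q)$, so that the explicit-update cost amortises to $O(Q \log n)$ rather than something larger, and so that the output size matches the update-time budget promised by the theorem.
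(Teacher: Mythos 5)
Your proposal is the Frechet-embedding approach that the paper explicitly considers and rejects (see the discussion preceding Theorem~\ref{thm:static}), and the objection the paper raises is a genuine gap in your argument, not merely a technical hurdle.

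The fatal step is the one you describe as ``immediate'': you claim the expansion bound follows coordinate-wise from $|d(u,A_i) - d(v,A_i)| \le d(u,v)$ ``up to the $(1+\varepsilon)$ oracle error.'' But replacing exact distances by a $(1+\varepsilon)$-approximate decremental SSSP oracle destroys this Lipschitz property entirely, not merely multiplicatively. If $\tilde d(\cdot,A_i)$ is only guaranteed to satisfy $d(x,A_i) \le \tilde d(x,A_i) \le (1+\varepsilon) d(x,A_i)$, then for two vertices $u,v$ with $d(u,A_i) \approx d(v,A_i) = D$ and $d(u,v) \ll D$, the oracle may return values differing by as much as $\varepsilon D$, so $|\tilde d(u,A_i)-\tilde d(v,A_i)|$ can be arbitrarily larger than $d(u,v)$. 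To restore a per-coordinate upper bound of $O(d(u,v))$ one would need $\varepsilon \lesssim d(u,v)/\mathrm{diam}(G)$ uniformly, i.e.\ $\varepsilon = O(1/(nW))$, at which point the $(1+\varepsilon)$-approximate SSSP machinery is no longer cheap. You flag the contraction (good-scale) argument as the hard part, but it is actually the expansion bound that breaks; this is exactly why the paper abandons Frechet coordinates in the dynamic regime.

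The paper instead defines each coordinate as a scaled binary indicator $\rho_i(v) = R_i\cdot\ind{v\in S_i}$, where $S_i$ is a random $(\beta_i,R_i,\epsilon_i)$-distance-preserving cut obtained by a Bartal-style $(\beta,R)$-weak decomposition followed by assigning each cluster a uniform random bit. Because the cuts are purely combinatorial objects, they can be maintained exactly under edge-weight increases by adapting the decremental clustering of Forster et al.\ (Lemma~\ref{lem:dec-decomp}), so no multiplicative distance error is incurred; the expansion analysis then bounds $\sum_i R_i \Pr{\cut_{u,v}(S_i)}$ by splitting scales into $R_i < d(u,v)$, $\epsilon_i \le d(u,v) \le R_i$, and $d(u,v) < \epsilon_i$, and the contraction analysis uses the cut-with-probability-$\ge 1/2$ property at the scale $R_i \approx d(u,v)$. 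Your per-scale random sample $A_i$ and the super-source SSSP oracle should be replaced by the weak decomposition plus random cluster signs, and the ``good-scale drop in $d(v,A_i)$'' by the guaranteed separation probability of the distance-preserving cut.
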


To prove the guarantees of this algorithm, we require an alternative, constructive proof of Bourgain's lemma. 
Our algorithm is different from standard approaches to the problem which

can be classified as ``Frechet embeddings." In these embeddings, each coordinates $\rho_i(v)$ takes the form of $d_{G}(v, S_i)$ where $S_i$ is a specific set. 
However, these approaches are not suitable for the dynamic setting due to limitations in analyzing their upper bound on $\norm{\rho(u) - \rho(v)}_p$ for every given $u$ and $v$. Specifically, the distances can be maintained only approximately at best, 
prohibiting us from obtaining an upper bound.

Starting from the static case, we introduce the notion of a (random) $(\beta, R, \epsilon)$-distance preserving cut. There are two main properties of a $(\beta, R, \epsilon)$-distance preserving cut. Ignoring for now the technical $\epsilon$ parameter of this notation, the parameters $\beta$ and $R$ control the following. First, we require that the probability that two vertices are in different sets is at most $\beta$ times the distance between these vertices in $G$. Intuitively, we can expect many close vertices to be on the same side of the cut. On the other hand, for every pair of vertices whose distance in $G$ is larger than $R$, we require probability at least $\frac{1}{2}$ that they are on different sides of the cut. 
The rationale behind the latter property is that such a cut will, with constant probability, properly distribute vertices that are of distance at least $R$ in $G$. We then construct $O(\log(nW))$ such cuts, where the $i$-th cut corresponds to a different choice of the distance steering parameter $R$, i.e. $R_{i} = 2^{i}$. The final embedding is made by assigning every vertex a vector of $O(\log{nW})$ coordinates, one coordinate for corresponding to each parameter choice $R_{i}$. For every cut we denote its two sides as ``left'' and ``right''. If a vertex is on the left side of the $i$-th cut, we set its $i$-th coordinate to $0$; if it is on the right side, we set the coordinate to $R_{i}$. Using both aforementioned properties of a $(\beta, R, \epsilon)$-distance preserving cut, we show that such an assignment is an embedding with $O(\log^3{n})$ stretch.

To implement this algorithm in the dynamically changing graph $G$, we prove that $(\beta, R, \epsilon)$-distance preserving cuts can be efficiently obtained from a $(\beta, \delta)$-weak decomposition of $G$, a probabilistic graph partitioning introduced by Bartal~\cite{bartal1996probabilistic}. In this decomposition, we partition vertices of $G$ into clusters such that the distance (with respect to $G$) between every pair of vertices in a cluster is at most $\delta$, but on the other hand, for every edge the probability that this edge connects two \textit{different} clusters is at most $\beta$ times its weight. To proceed to $(\beta, R, \epsilon)$-distance preserving cuts, we augment this construction by randomly assigning each cluster to one of the two sides of the cut. In the analysis, we manage to show that such simple random assignments guarantee the properties we require from a $(\beta, R, \epsilon)$-distance preserving cut. On the other hand, provided that we are able to dynamically maintain $(\beta, \delta)$-weak decomposition of $G$, it is simple to update the random assignment after each update. To deal with a $(\beta, \delta)$-weak decomposition of $G$ under dynamic updates, we lean on the result of~\cite{forster2021dynamic} who showed how to maintain such a decomposition under \textit{edge deletions}. We observe that their framework, with few technical changes, translates to our settings.

We discuss the details of the underlying \emph{static} tools used to maintain this structure in Section~\ref{sec:alg-description} and proceed to augment these procedures to maintain edge weight updates in Section~\ref{sec:dynamic}. 
Moreover, we note that
the embedding can be used to implement a dynamic distance oracle (all-pairs shortest paths), as for each two vertices in the graph, we can
estimate their distances efficiently by calculating the distance between their embeddings.
While our distance guarantees only hold in expectation, the update
time of a distance oracle based on our algorithm nearly matches the best known bounds for the APSP problem for $O(\log^2n)$ stretch~\cite{chechik2018near, forster2023bootstrapping}, which further shows the tightness of our analysis. 

\subsection{Related Work}
\label{sec:related_work}

\paragraph{Metric Embedding.} The foundational result for the algorithmic applications of metric embedding is that of Bourgain in 1985 \cite{bourgain1985lipschitz} which embeds into any $\ell_p$ metric with logarithmic distortion. When the input metric is already the $\ell_2$ metric,
the result of Johnson and Lindenstrauss \cite{jl1984} shows that its size can be reduced to $O(\log n / \varepsilon^2)$ with $(1+\varepsilon)$ distortion for $\varepsilon > 0$. 
Recent works have studied lower bounds for the minimum number of dimensions necessary for this compression; e.g., see~\cite{larsen2017optimality}.
To the best of our knowledge, these embedding results have no analogous algorithm in the dynamic setting, which we formulate in the present work.

While $\ell_p$ space is extremely useful for functional approximation and other challenging mathematical problems, there also exists a line of research on the embeddings of an input metric to a \emph{tree metric} which inherently lends itself to dynamic problems. For embedding into these tree structures, an emphasis is placed on algorithms for \emph{probabilistic tree embeddings} (PTE) where the host metric is embedded into a \emph{distribution} of trees. Concretely, given a graph $G$, the objective is to find a distribution over a set $\tau$ of trees such that distances in $G$ do not get contracted and the expected distances over the randomly sampled tree distribution do not exceed a multiplicative \emph{stretch} of $\alpha$ (stretch here can be considered interchangeable with the concept of distortion). The preliminary work on such embeddings from Bartal \cite{bartal1996probabilistic} demonstrated that by a ``ball growing'' approach, we can embed any graph with $O(\log^2 n)$ stretch with a nearly equivalent lower bound of $\Omega(\log n)$ stretch for any such embedding. This work was later improved to obtain a PTE procedure with optimal $O(\log n)$ stretch \cite{frt2004} which has applications in problems for metric labeling \cite{kleinberg2002}, buy-at-bulk network design \cite{awerbuch1997}, vehicle routing \cite{charikar1998}, and many other such contexts \cite{bartal2004,garg2000}. %
Our dynamic emebdding procedure combines this ball growing approach with a decremental clustering procedure to efficiently maintain an embedding into the $\ell_p$-metric.

\paragraph{Dynamic Embedding.} Closely related to our work is the study of dynamic embeddings into trees. %
The work of \cite{forster2019dynamic} initiates the study on the dynamic maintenance of low-stretch such \emph{spanning} trees, devising an algorithm that yields an \emph{average distortion} of $n^{o(1)}$ in expectation with $n^{1/2 + o(1)}$ update time per operation. This result was later improved to $n^{o(1)}$ average distortion and update time bounded by $n^{o(1)}$ \cite{chechik2020dynamic}. %

The restriction of these prior works to the maintenance of spannning trees is an inherently more difficult and limited problem instance. To improve upon the above bounds, \cite{forster2021dynamic} removes this restriction and designs an embedding procedure that guarantees an \emph{expected} distortion of $n^{o(1)}$ in $n^{o(1)}$ update time, or $O(\log^4 n)$ stretch with $m^{1/2 + o(1)}$ update time when embedding into a distribution of trees. This work also devises a decremental clustering procedure that we build upon in the present work to devise our embeddings. We additionally note that the expected distortion objective more closely aligns with our primary result, however our embedding into the $\ell_p$ -metric is better suited for the class of NP-hard optimization problems whose approximation algorithms rely on the geometry of Euclidean space such as sparsest cut \cite{arora2005euclidean,aumann1998log,chawla2008embeddings}, graph decompositions \cite{arora2009expander,linial1995geometry}, and the bandwidth problem \cite{dunagan2001euclidean,feige1998approximating,krauthgamer2004measured}. 

Similar to the present work is the study of \emph{dynamic distance oracles} as originally studied by \cite{thorup2005approximate} in the static setting, and later extended to the decremental setting with a data structure which maintains the distance between any two points from the input metric with $O(2k-1)$ stretch, $\tilde{O}(mn)$ total update time and $O(m + n^{1+1/k})$ space (where $k$ is any positive integer) \cite{roditty2012dynamic}. %
This result can be further improved to a distortion of $1+\varepsilon$ with $\tilde O(n^2)$ space for every $\varepsilon > 0$. \cite{chechik2018near} further present a decremental algorithm for the all pairs shortest path (APSP) problem which admits $(2+\varepsilon)k - 1$ distortion with total update time of $O(mn^{1/k + o(1)}\log(nW))$ and query time $O(\log\log(nW))$. Our embedding which generalizes this notion of distance oracle yields a nearly equivalent update time for $O(\log^2 n)$ expected distortion, further demonstrating the tightness of our analysis. 

In the next section, we precisely define the mathematical framework and formalization within which our algorithmic techniques reside. 

\section{Model and Preliminaries} \label{sec:prelim}

Let $G = (V,E)$ be a weighted, undirected graph on $n$ vertices 
with (at most) $m$ edges of positive integer weights in the range from
1 to $W$, where $W$ is a fixed parameter known to the algorithm.
For an edge $(u,v) \in E$, we denote
its weight by $w_G(u,v)$. 
For every pair of nodes $u,v\in V$, let $d_G(u,v)$ be
the length of the shortest weighted path between nodes $u,v$
in $G$, where we define the weight of a path as the sum of the weights
of its edges.
Throughout, we let $\Delta$ denote a power of two that is always larger than the diameter of the graph; note that $\Delta \in O(nW)$.
We note that $(V, d_{G})$ is a metric space.

Given a set of vertices $V'
\subseteq V$, we define the \emph{weak diameter} of $V'$ as the maximum
distance 
between the vertices of $V'$ in the original graph, i.e.,
\begin{math}
    \wdiam(V') = \sup_{u, v \in V'}d_{G}(u, v)
    \ .
\end{math}
For all $u \in V$ and 
$r \ge 0$, let $B_G(u,r)$ denote the set of all vertices that
are within distance $r$ from $u$ in the graph $G$, i.e.,
$B_{G}(u, r) := \cbr{v \in V: d_{G}(u, v) \le r}$. 

\textbf{Metric Embedding.} 
The objective of this paper is to 
construct and maintain an embedding 
of the metric
defined by an input graph $G$ 
to an $\ell_p$ metric space without distorting
the original distances by too much.
More formally, given a metric space $(X, d_{X})$,
an injective mapping $f : G \rightarrow X$ is called an
\emph{embedding}, from $G$ into $X$. 
We define the \emph{expansion} (or stretch) and the \emph{contraction}
of the embedding $f$, respectively, as: 
\begin{align*}
  \expans(f) &= \sup_{u,v\in V; u \ne v}\frac{d_{X}(f(u), f(v))}{d_{G}(u, v)} 
\\
  \contr(f) &= \sup_{u,v\in V; u \ne v}\frac{d_{G}(u, v)}{d_{X}(f(u), f(v))}
  \ .
\end{align*}
We define the distortion of the embedding $f$ as
$\distort(f) = \expans(f) \cdot \contr(f)$.
Note that any embedding $f$ satisfies
\begin{math}
  \frac{1}{\contr(f)}\cdot d_{G}(u, v) \le d_{X}(f(u), f(v)) \le \expans(f) \cdot d_{G}(u, v).
\end{math}
The embeddings in this paper are random functions, and are constructed by randomized algorithms.
Given a random embedding $f: V \to X$,
we define its \emph{expected distortion}
as the smallest value $\alpha > 0$
for which
there exist positive values $a, b$ satisfying
$ab=\alpha$ such that for all $u, v \in V$:
\footnote{Throughout the paper, we mostly consider $a=1$. As such, we sometimes use distortion and stretch interchangeably since we are only concerned with the expansion of distances between points.}
\begin{align}
  \frac{1}{a} \cdot d_{G}(u, v) \le \Ex{d_{X}(f(u), f(v))} \le b \cdot d_{G}(u, v)
    \ .
  \label{eq:dist_expt}
\end{align}
In this paper, we focus on embeddings into the $\ell_p$ metric space.
In this metric space, the ground set $X$ equals $\R^{d}$, for some positive integer
$d$, and for every pair of points $x, y \in X$, the distance $d_{X}$ is defined as
\begin{align*}
  d_{X}(x, y) = \norm{x - y}_{p} = \rbr{\sum_{i=1}^{d} |x_i - y_i|^{p}}^{1/p},
\end{align*}
where $x_i$ and $y_i$ refer to the $i$-th coordinate of $x$ and $y$, respectively.

\textbf{Dynamic Model.} 
We consider a model where
the underlying input graph $G$ undergoes
a sequence of updates as specified by an \emph{oblivious} adversary.
We assume that the adversary knows the algorithm, but does
not have access to the random bits the algorithm uses.
We use $G_0, G_1, \dots$ to denote the corresponding sequence of graphs,
where $G_i$ refers to the graph after $i$ updates. 
Throughout, we will use $Q$ to denote the total number of updates to an input graph.
This sequence is fixed by the adversary before the execution of the algorithm, but is revealed to the algorithm gradually, one by one.
Our goal is to explicitly maintain
an embedding after each update, as formally defined below:
\begin{definition}[Maintain]
  We say that a dynamic algorithm $\mA$ \emph{explicitly maintains} an embedding
  of the input graph into a metric space $(X, d_{X})$
  if there exists a sequence of mappings
  $\phi_0, \phi_1, \dots$ where
  $\phi_i: V \to X$ and
  $\mA$ outputs the changes in $\phi$ after every update.
  Formally, after the update $t$,
  the algorithm should output $v$ and $\phi_t(v)$ for all $v$ such that $\phi_t(v) \ne \phi_{t-1}(v)$.
\end{definition}

We operate in the \emph{decremental} setting and assume that each update takes the form
of an edge weight increase, i.e., for an edge $(u, v)\in E$,
the value of $w_{G}(u, v)$ increases.
We note that this is slightly different from the standard
definition of the decremental setting which permits the \emph{deletion} of edges in the input graph. The deletion of an edge can lead the input graph to potentially become disconnected, which means we may have $d_{G_{t}}(u, v) =\infty$ for some time step $t$ and $u, v\in V$.
This is problematic, however, because regardless
of the value of $\phi_t(u)$ and $\phi_t(v)$, we will always have
$\norm{\phi_t(u) - \phi_t(v)}_{p} < \infty$ because the $\ell_p$
metrics do not allow for infinite distances.
This in turn means that we cannot satisfy
the bounds for expected distortion (Equation \eqref{eq:dist_expt}),
and as such cannot design a low-distortion embedding.
To avoid
this issue, we restrict the updates to edge weight increases only, and we note that in practice the removal an edge can be simulated by choosing a large
$W$ as the dependence of our bounds on $W$ will be polylogarithmic. Thus, edge weight increases serve as a necessary stand-in for edge deletions as both will lead to pairwise distances increasing. %

In the section that follows, we will show that maintaining a fully dynamic embedding, where edge weights are subject to both increases and decreases, that
has low distortion with high probability is unfeasible in the $\ell_p$-metric space if the distortion bounds hold. 
This limitation underpins the rationale for the above decremental problem setting we introduce.

\section{Lower Bound for Explicit Maintenance of Fully Dynamic Embeddings} \label{sec:lower_bound} %

We first present an (oblivious) adversarial construction of edge weight modifications to a graph in the \emph{fully dynamic} model that cannot be \emph{explicitly} maintained in the geometry of $\ell_p$ space without needing to modify the embedding for every node in the original graph. We highlight that this is a high probability result whereas the main algorithmic results we obtain hold in expectation.

\begin{theorem}
    Any fully dynamic algorithm that 
    maintains an embedding into the $\ell_p$-metric space which guarantees a distortion of at most $o(W)$ with high probability must have an update time at least $\Omega(n)$.
\end{theorem}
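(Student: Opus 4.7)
The plan is to exhibit an oblivious sequence of edge-weight updates on a carefully chosen graph that forces any algorithm satisfying the distortion hypothesis to change the coordinates of $\Omega(n)$ vertices per update. I partition $V$ into three equal-sized groups $A, B, C$ each of size $n/3$, make each one a weight-$1$ clique, and include exactly two cross-group edges $e_{AB}$ (the only $A$-$B$ edge) and $e_{AC}$ (the only $A$-$C$ edge). The adversary alternates between two states --- state 1 with $(w(e_{AB}),w(e_{AC}))=(1,W)$ and state 2 with $(w(e_{AB}),w(e_{AC}))=(W,1)$ --- each transition implemented by two single-edge updates. Because $e_{AB}$ is the unique $A$-$B$ edge and $e_{AC}$ the unique $A$-$C$ edge, state 1 gives $d_G(a,b)\le 3$ and $d_G(a,c)\ge W$ for all $a\in A, b\in B, c\in C$, and state 2 reverses both.

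A union bound over the polynomially many time steps gives the high-probability event that $\distort(\phi_t)\le o(W)$ at every step, on which I condition. Fix a transition from state 1 at time $t$ to state 2 at time $t+2$, and let $S$ be the set of vertices whose coordinates change during these two updates; the update time is at least $|S|$. Suppose toward contradiction that $|S|<n/3$; then $A\setminus S, B\setminus S, C\setminus S$ are each non-empty, and I pick $a, b, c$ accordingly. Since $a$ and $b$ are both unchanged, the common value $L_{ab}:=\|\phi_t(a)-\phi_t(b)\|_p=\|\phi_{t+2}(a)-\phi_{t+2}(b)\|_p$ is sandwiched by the expansion bound at $t$ and the contraction bound at $t+2$:
\[
L_{ab}\ \le\ 3\,\expans(\phi_t) \qquad \text{and} \qquad L_{ab}\ \ge\ W/\contr(\phi_{t+2}),
\]
so $W\le 3\,\expans(\phi_t)\,\contr(\phi_{t+2})$. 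The symmetric argument on $(a,c)$, whose $d_G$ decreases from $\ge W$ to $\le 3$, yields $W\le 3\,\expans(\phi_{t+2})\,\contr(\phi_t)$. Multiplying,
\[
W^2\ \le\ 9\cdot\distort(\phi_t)\cdot\distort(\phi_{t+2})\ \le\ 9\,o(W)^2,
\]
a contradiction for large enough $W$. Hence $|S|\ge n/3$, and at least one of the two updates performs $\Omega(n)$ coordinate changes.

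The main conceptual obstacle is the scale-invariance of distortion: a naive construction with only two groups and a single toggling bridge is insufficient, because the algorithm can satisfy $\distort\le o(W)$ at both time steps without moving many vertices by using a highly unbalanced embedding (for instance, making one cluster very tight relative to the inter-cluster axis and rescaling across the transition); that approach yields only the single constraint $\expans(\phi_t)\,\contr(\phi_{t+2})\ge W/3$, which is compatible with both distortion bounds. Using three groups with two simultaneously-but-oppositely toggling bridges produces a \emph{pair} of such constraints whose product $\distort(\phi_t)\distort(\phi_{t+2})\ge W^2/9$ finally beats the $o(W)$ hypothesis, while the remaining bookkeeping --- amortizing the $n/3$ lower bound over the two constituent updates and handling the oblivious adversary by union-bounding --- is routine.
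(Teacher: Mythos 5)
Your three-group argument is correct, and it actually closes a genuine gap in the paper's own proof. The paper uses only two cliques $H, H'$ joined by a single edge of toggling weight, and claims that when the $H$--$H'$ distance jumps from $\le 3$ to $\ge W$, at least one endpoint of every pair $(u,u')\in H\times H'$ must move. But under the paper's own definition of distortion as $\expans\cdot\contr$, an algorithm that \emph{statically} places the two cliques at mutual embedded distance $\Theta(\sqrt W)$ (with within-clique embedded distance $\Theta(1)$) achieves distortion $O(\sqrt W)=o(W)$ in \emph{both} states --- pure expansion in the light state, pure contraction in the heavy state --- while never moving any vertex. As you correctly note, the two-group construction only yields the mixed bound $\expans(\phi_t)\contr(\phi_{t+1})\ge W/3$, which is not a distortion bound at either time. (The paper's argument does go through under the extra non-contractive normalization $\contr\le 1$, which the paper hints at in a footnote about taking $a=1$, but that hypothesis is not in the theorem statement.) Your anti-phase two-bridge construction supplies the complementary inequality $\expans(\phi_{t+2})\contr(\phi_t)\ge W/3$, and multiplying the two gives $\distort(\phi_t)\distort(\phi_{t+2})\ge W^2/9$, which is a statement about distortion proper and genuinely contradicts any uniform $o(W)$ guarantee. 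The remaining bookkeeping --- union-bounding over the polynomially many updates, selecting one untouched representative from each of $A$, $B$, $C$ when $|S|<n/3$, and repeating the toggle to defeat amortization --- is handled correctly. In short, your proof takes a different and strictly more careful route than the paper's, and it is the one that actually establishes the stated theorem under the general definition of distortion.
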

\begin{proof}
    Let $\mA$ be a fully dynamic algorithm which guarantees a stretch of at most $o(W)$ with high probability.
    Consider an input graph $G$ that consists
    of two separate complete graphs on $n$ vertices, 
    $H$ and $H'$, comprised of unit edges.
    Further consider two fixed vertices 
    $v \in H, v'\in H'$.
    If there
    is a unit edge between these two vertices, then the distance
    of all elements in 
    $H$ and $H'$
    is at most 3 in the graph metric, and therefore in the $\ell_p$ embedding cannot be more than $o(W)$.

    Now, assume an adversary increases the edge weight connecting the vertices $v$ and $v'$ to a value of $W$. In the original graph metric, all pairwise distances between the nodes of $H$ and $H'$ must now be at least $W$. Therefore, the embedded points of one cluster ($H$ or $H'$) must be updated so as to not contract the original metric and maintain the distortion of at most $W$ with high probability (see Figure~\ref{fig:lower_bound} for a depiction of this construction). Therefore, the algorithm $\mA$ must update the embedding for all $n$ nodes of one of the complete components of $G$ to be at least $W$ away from the other with respect to the $\ell_p$ norm and satisfy the distortion constraints with high probability.
    Thus, we charge at least $\Omega(n)$ to the update time of $\mathcal A$ in the worst case for the maintenance of the embedding.
    \footnote{Formally, for each pair of vertices in $H \times H'$, at least one of them needs to be updated. Since there are $\Omega(n^2)$ pairs and each vertex update resolves the issue for $O(n)$ pairs, we need $\Omega(n)$ vertex updates.}
    Moreover, we cannot amortize this worst case update occurrence since, in the subsequent iteration, the adversary can change the edge weight back 1 and repeat the cycle -- resulting in $\Omega(n)$ updates per iteration.
    \eat{
    \mscomment{revising this proof...}
    If this connected edge is increased by an adversary to a value of $W$ however, the embedded distance must then proportionately increase and subsequently all nodes in the two connected components embeddings must be updated. 

    Therefore, if you consider any pair of vertices 
    $u \in H, u' \in H'$, 
    then if we keep adding and removing the edge between $v_1, v_2$, then at least one of $u_1$ and $u_2$ should move each time. It follows that at least $O(n)$ changes are necessary each time there is an update. Note that the changes can't be amortized because we can keep inserting and deleting.
    }
\end{proof}
\begin{figure}[b]
\includegraphics[width=0.85\linewidth]{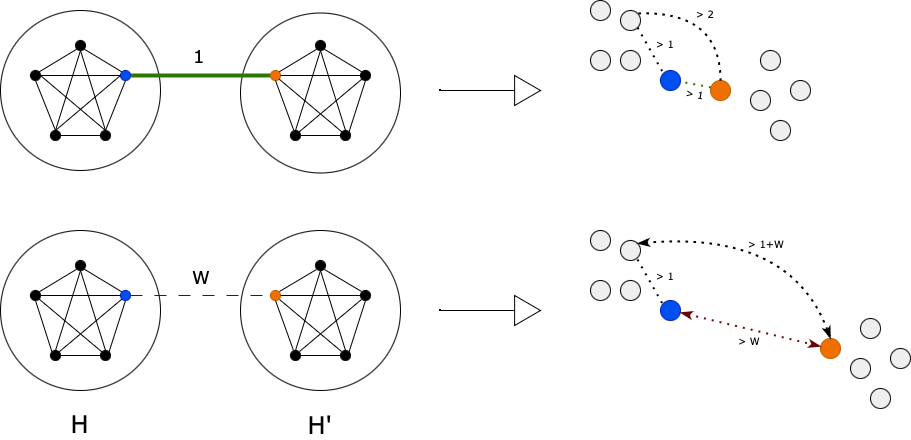}
\centering
\caption{Adversarial sequence of graph updates}
\label{fig:lower_bound}
\end{figure}
Though this sequence is simplistic, it highlights the inherent limitations of embedding into a complete and locally compact metric like the $\ell_p$ normed space. 
We additionally remark that, in the expected distortion setting of our algorithmic result, this lower bound does not persist since a large increase in the expected pairwise distances between nodes does not necessarily imply the embedding for every pair of points has been updated.

\section{Static algorithm}
\label{sec:alg-description}

We proceed to present our algorithm by first presenting the static partitioning
procedure, which is used to initialize our data structure and is subsequently
maintained through the sequence of updates specified by the adversary.
While our ideas are based on prior work, to our knowledge, this static algorithm is a novel construction that has not appeared
before in the literature.

\subsection{Distance Preserving Cuts} \label{sec:dis_pres}

Our algorithm dynamically maintains
an 
embedding based on a set of cuts in the graph,
where each cut is designed to separate vertices with
distances above some threshold $R$,
while simultaneously preserving distances of the vertices in the graph.
We formally define the notion of a \emph{distance preserving cut}.
\begin{definition}[Distance preserving cut] \label{def:dis_pres}
    Given a graph $G=(V, E)$,
    let $S \subseteq V$
    be a random subset of the vertices.
    For vertices $u, v$, let $\cut_{u, v}$
    denote the event that
    $u, v$ are on different
    sides of the partition $(S, V \backslash S)$, i.e.,
    \begin{align*}
        \cut_{u, v} = 
        \cbr{u \in S \text{ and }v \notin S}
        \text{ or }
        \cbr{u \notin S \text{ and }v \in S}
        \ .
    \end{align*}
    We say that $S$
    is a $(\beta, R, \epsilon)$-distance preserving cut, or $(\beta, R, \epsilon)$-cut for short, if
    it has the following three properties:
    \begin{itemize}
      \item $\Pr{\cut_{u, v}} \le \beta \cdot d(u, v)$
        for every $u, v$,
      \item $\Pr{\cut_{u, v}} = 0$
        for every $u, v$ such that $d(u, v) < \epsilon$,
      \item $\Pr{\cut_{u, v}} \ge \frac{1}{2}$
        for every $u, v$ such that $d(u, v) > R$.
    \end{itemize}
\end{definition}

Following in the algorithmic technique of decomposing the graph into these smaller sets with desirable pairwise distance bounds is a refinement on the ball-growing approach of Bartal \cite{bartal1996probabilistic} and the padded decomposition at the heart of Bourgain's embedding results \cite{bourgain1985lipschitz}. Most importantly, this efficient cut set construction allows us to contract edges which are small enough to ignore in the approximation factor and also provide logarithmic distortion bounds on the larger paths -- a fact that will be verified in the following analysis.

The main result in this section is the following lemma that guarantees the existence of such cut sets and will be used heavily in our pre-processing graph decomposition at various granularities which in turn leads to the desired distortion bounds promised in Theorem~\ref{thm:dyn-emb}. %

\begin{lemma}
  For every $1 \le R \le \Delta$, there exists
  a $(\beta, R, \epsilon)$-distance preserving cut with
  $\beta = O\left(\frac{\log n}{R} \right)$ and $\epsilon = \Omega\left(\frac{R}{n}\right)$.
\end{lemma}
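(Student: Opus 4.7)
The plan is to obtain the cut in two stages: a Bartal-style weak decomposition of a slightly preprocessed graph, then an independent uniform two-coloring of its clusters. The preprocessing is what makes Property~2 hold \emph{deterministically}.

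Set $\epsilon := R/(2n)$ and form $G'$ by contracting every edge of $G$ of weight strictly less than $\epsilon$; the vertices (``atoms'') of $G'$ are the connected components of $G$ restricted to edges of weight $<\epsilon$. Apply a Bartal-type $(\beta_0,\delta)$-weak decomposition to $G'$ with $\delta = R/2$ and $\beta_0 = O(\log n/R)$, obtaining clusters $C_1,\dots,C_k$ satisfying (i) $\wdiam_{G'}(C_j)\le R/2$ and (ii) $\Pr{a,b\text{ in different clusters}}\le \beta_0\cdot w(a,b)$ for every edge $(a,b)$ of $G'$. Draw independent fair coins $X_j\in\cbr{0,1}$ and let $S := \bigcup_{j:\,X_j=1} C_j$, unfolding each cluster into a subset of $V$. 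Then $\Pr{\cut_{u,v}} = \tfrac{1}{2}\,\Pr{u,v\text{ in different clusters}}$: same-cluster pairs are never cut, and different-cluster pairs are cut exactly when two independent fair coins disagree.

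For Property~2, if $d_G(u,v)<\epsilon$ then every edge on a shortest $u$-$v$ path has weight below $\epsilon$, so $u,v$ lie in the same atom and hence the same cluster, giving $\Pr{\cut_{u,v}}=0$; moreover $\epsilon=R/(2n)=\Omega(R/n)$ as required. For Property~1, a shortest $u$-$v$ path in $G$ induces, after dropping its contracted edges, a walk in $G'$ of total weight at most $d_G(u,v)$, so a union bound via~(ii) yields $\Pr{\text{different clusters}}\le \beta_0\cdot d_G(u,v)$ and hence $\Pr{\cut_{u,v}}\le O(\log n/R)\cdot d_G(u,v)$. For Property~3, I must bound the $G$-weak (not just $G'$-weak) diameter of every cluster by $R$: a $G'$-path of length at most $R/2$ between two atoms in a common cluster, when unfolded to $G$, picks up additional within-atom traversal bounded by $\sum_j(|A_{i_j}|-1)\epsilon < n\epsilon = R/2$, since the visited atoms are disjoint and jointly contain at most $n$ vertices, and each within-atom edge has weight below $\epsilon$. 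So clusters have $G$-weak diameter at most $R/2 + R/2 = R$, which forces any pair with $d_G(u,v)>R$ into different clusters and gives $\Pr{\cut_{u,v}}=1/2$.

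The main obstacle is the simultaneous calibration of the contraction threshold $\epsilon$ and the Bartal parameter $\delta$. On the one hand $\epsilon$ must be large enough to satisfy the $\Omega(R/n)$ bound in the statement and to absorb every short-distance pair into a single atom; on the other, it must be small enough that the accumulated within-atom slack of order $n\epsilon$ stays within the budget left by the weak-diameter guarantee for Property~3. The calibration $\epsilon = R/(2n)$ and $\delta = R/2$ closes the loop, modulo the two routine monotonicity facts that $G'$-shortest-path distances are dominated by the corresponding $G$-distances, and that any $G$-path projects to a valid $G'$-walk after deleting its contracted edges.
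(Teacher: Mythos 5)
Your proposal is correct and follows essentially the same route as the paper: contract edges below a threshold $\epsilon = \Theta(R/n)$, run a Bartal-style low-diameter randomized decomposition on the contracted graph, and independently two-color the clusters so that the cut probability equals exactly half the cluster-separation probability (this is the paper's Theorem~\ref{thm:bartal} together with Lemma~\ref{lem:clu-to-cut}). If anything, your bookkeeping for Property~3 is a bit more explicit than the paper's: you correctly observe that the Bartal guarantee controls the weak diameter in the contracted graph $G'$, and you account for the unfolding cost of at most $n\epsilon = R/2$ when passing back to $G$-distances, whereas the paper simply invokes $\wdiam(C_i) \le R$ without dwelling on the $G$ versus $G'$ distinction. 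Your $\delta = R/2$, $\epsilon = R/(2n)$ calibration closes that loop cleanly.
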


We now present the proof of this lemma which uses the
randomized decomposition method of Bartal~\cite{bartal1996probabilistic} in conjunction with a 
novel probabilistic compression procedure. First, we review the definition of an $R$-partition of a graph $G$ (as originally defined in Bartal~\cite{bartal1996probabilistic}).
\begin{definition}
    An $R$-partition of $G$ is a collection of subsets of vertices $P = \{V_1, ..., V_k\}$ such that
    \begin{itemize}
        \item For all $i \in [k]$, $V_i \subseteq V$ and $\bigcup_{i \in [k]} V_i = V$.
        \item For all $i,j \in [k]$ such that $i \neq j$, $V_i \cap V_j = \emptyset$.
        \item Let $C(V_i)$ denote the subgraph of $G$ induced on the vertices
          of $V_i$. Such a subgraph is referred to as a ``cluster'' of the
          partition and for every $i \in [k]$, the weak diameter of $V_i$ is
          bounded above as $\wdiam(C_i) \le R$.
    \end{itemize}
\end{definition}

Next, we give the definition of a $(\beta, R)$-weak decomposition - a modificiation on the $R$ partition that probabilistically ensures vertices close to each other appear in the same cluster.
\begin{definition}
  Given a graph $G=(V, E)$, let $\mC = \cbr{C_1, \dots C_k}$ be a (random) partitioning of the vertices.
  For every $u\in V$, let $C(u) \in [k]$ denote
  the index $i$ such that $u \in C_i$. 
  We say that $\mC$ is a $(\beta, R)$-weak decomposition of $G$ if
  for every $u, v$ we have
  \begin{align*}
      \Pr{C(u) \ne C(v)} \le \beta \cdot d(u, v)
  \end{align*}
  and for every $i$ and pair $u, v \in C_i$ we have
  $d(u, v) \le R$. 
\end{definition}

Following Bartal~\cite{bartal1996probabilistic}, we prove that for all $R$ there exists a $(O(\log n / R), R)$-weak decomposition of $G$ that has the additional property that vertices which are closer than $\frac{R}{2n}$ are necessarily in the same cluster.
Formally, %
\begin{theorem}\label{thm:bartal}
  Given a graph $G = (V,E)$ with parameters $1 \le R \le \Delta$, there exists a $(\beta, R)$-weak decomposition $\{C_1, ..., C_k\}$ with $\beta = O(\log n/ R)$ such that for every pair of vertices $u,v \in V$:
  \begin{itemize}
      \item $\Pr{C(u) \neq C(v)} \le \beta \cdot d_{G}(u,v)$
      \item If $d_{G}(u, v) < \frac{R}{2n}$ then $u$ and $v$ are in the same cluster.
  \end{itemize}
\end{theorem}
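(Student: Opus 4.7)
The plan is to combine Bartal's ball-growing decomposition with a deterministic preprocessing step that identifies vertices lying too close to one another, so that the close-pair property holds by construction.

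Concretely, I would define an equivalence relation $\approx$ on $V$ as the reflexive-symmetric-transitive closure of the pair relation $\{(u,v) : d_G(u,v) < R/(2n)\}$, and let $G' := G/\approx$ be the contracted graph whose vertices are the equivalence classes and whose inter-class edges inherit their weights from $G$ (taking minima over parallel edges). Two elementary facts follow from this construction: any two distinct classes lie at $d_G$-distance at least $R/(2n)$ (otherwise they would have been identified by $\approx$), and any class $A$ of size $|A|$ has $d_G$-weak-diameter strictly less than $(|A|-1)\cdot R/(2n)$, via a chain of close pairs and the triangle inequality.

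Next, I would invoke Bartal's classical random-partition theorem on the metric $(V(G'), d_{G'})$ at scale $R_B = R/2$. Since $|V(G')| \le n$, this produces a random partition $\widetilde{\mathcal{C}} = \{\widetilde{C}_1, \widetilde{C}_2, \ldots\}$ of the classes satisfying $\sup_{A, B \in \widetilde{C}_i} d_{G'}(A, B) \le R/2$ and $\Pr{\widetilde{C}(A) \ne \widetilde{C}(B)} \le O(\log n / R) \cdot d_{G'}(A, B)$ for every pair of classes $A, B$. Finally, I would lift back to $V$ by setting $C_i := \bigcup_{A \in \widetilde{C}_i} A$ and verify the three required properties.

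The close-pair property is immediate: if $d_G(u,v) < R/(2n)$ then $u \approx v$, so $u,v$ share a $C_i$ deterministically. For the cut probability, when $u,v$ lie in distinct classes $A \ne B$, one has $\Pr{C(u) \ne C(v)} = \Pr{\widetilde{C}(A) \ne \widetilde{C}(B)} \le O(\log n / R) \cdot d_{G'}(A, B) \le O(\log n / R) \cdot d_G(u, v)$, where the last step uses that any $u$-to-$v$ path in $G$ projects to an $A$-to-$B$ walk in $G'$ of no greater length. The weak-diameter bound is the delicate part: fix $x \in A, y \in B$ with $A, B \in \widetilde{C}_i$, pick a shortest $G'$-path $A = A_0, \ldots, A_k = B$ inside $\widetilde{C}_i$, and lift it to a $G$-walk $x \to u_0 \to v_0 \to u_1 \to v_1 \to \cdots \to y$, where $(u_j, v_j) \in E(G)$ is the $G$-edge witnessing the $G'$-edge $(A_j, A_{j+1})$. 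The inter-class portion sums to $d_{G'}(A, B) \le R/2$, while the intra-class portion is bounded by $\sum_{j=0}^{k} \wdiam(A_j) < \sum_{j=0}^{k} (|A_j|-1) \cdot R/(2n) \le (n - (k+1)) \cdot R/(2n) < R/2$, giving $d_G(x, y) < R$ as required.

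The main subtlety, and the trick that makes the constants in the theorem work out cleanly, is the weak-diameter bookkeeping above: naively a lifted $G'$-path of $k$ hops could accumulate up to $k \cdot R/2$ of intra-class travel, but because the visited equivalence classes are pairwise disjoint and collectively contain at most $n$ vertices, the \emph{sum} of their $d_G$-diameters is bounded by $(n-1) \cdot R/(2n) < R/2$ regardless of how large $k$ is. This is precisely what lets me take the Bartal scale as large as $R_B = R/2$ without overshooting the $\wdiam \le R$ constraint, while still obtaining $\beta = O(\log n / R_B) = O(\log n / R)$ to match the theorem.
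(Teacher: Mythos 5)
Your proof is correct and follows essentially the same strategy as the paper: contract vertices that are $d_G$-close (at scale $R/(2n)$), apply Bartal's low-diameter random decomposition on the contracted graph, and lift back. The paper's own proof is only a brief sketch that contracts short edges and refers the reader to Bartal~\cite{bartal1996probabilistic} for the probability bound; it does not spell out why the clusters of the lifted partition still have $d_G$-weak-diameter at most $R$ after contraction. Your argument supplies exactly that missing accounting: running Bartal at scale $R/2$ on $G'$, and observing that the intra-class travel along any shortest $G'$-path is controlled by $\sum_j \wdiam_G(A_j) < \sum_j (|A_j|-1)\cdot R/(2n) \le (n-1)\cdot R/(2n) < R/2$ because the visited classes are disjoint. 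That disjointness-sum trick is the right way to make the constants close, and it is more careful than what the paper writes down.

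One small misstatement you should fix: you write ``pick a shortest $G'$-path $A = A_0, \ldots, A_k = B$ \emph{inside} $\widetilde{C}_i$.'' Since Bartal's decomposition only bounds the \emph{weak} diameter of each cluster, the shortest $A$-to-$B$ path in $G'$ realizing $d_{G'}(A,B) \le R/2$ need not stay inside $\widetilde{C}_i$, and a path constrained to $\widetilde{C}_i$ may be longer than $R/2$ (or may not exist if the cluster is internally disconnected). Fortunately, your bookkeeping does not actually use that the intermediate classes lie in $\widetilde{C}_i$: a shortest $G'$-path visits distinct classes, and distinct classes are disjoint subsets of $V$, so $\sum_j(|A_j|-1)\le n-(k+1)$ holds regardless of which clusters the $A_j$ land in. Simply drop the phrase ``inside $\widetilde{C}_i$'' and the proof is airtight. (A second negligible nitpick: for singleton classes $|A|=1$ the bound $\wdiam_G(A) < (|A|-1)R/(2n)$ should be $\le$; this does not affect anything.)
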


\begin{algorithm}[t]
\caption{Low-Diameter Randomized Decomposition (\textsc{Ldrd}) \cite{bartal1996probabilistic}}
\label{alg:bartal}
\begin{algorithmic}[1]
    \STATE \textbf{Input:} Graph $G= (V,E)$ and parameter $1 \le R \le \Delta$
    \STATE \textbf{Output:} An LDRD of G, denoted $\mathcal C = \{C_i\}_{i=1}^n$
    \STATE Contract edges of $G$ which are at most $\frac{R}{2n}$
    \STATE Set $U \leftarrow V$
    \FOR{$u \in U$}
        \STATE Sample $r \sim G(\beta)$
        \STATE $r \gets \min \{r, R\}$
        \STATE $C(u) \leftarrow \{v \in U : d_G(u,v) \le r\}$
        \STATE $U \leftarrow U \setminus C(u)$
    \ENDFOR
    \STATE \textbf{Return:} $\{C_i\}_{i=1}^n$
\end{algorithmic}
\end{algorithm}

Given this randomized decomposition of our graph, we can construct the
desired ``cut'' set that preserves distances by a simple random compression
scheme that combines clusters from the above process. Specifically, we take
each cluster from Theorem~\ref{thm:bartal} and independently assign to one side of the
constructed cut, grouping all the clusters into one of two groups. Within these
groups we then merge the clusters to obtain our desired cut sets, $S$ and $V
\setminus S$. 
The following lemma verifies that this is a distance preserving
cut and the pseudocode is presented in Algorithm~\ref{alg:bre_cut} for clarity. The proof is deferred to the appendix due to space constraints.

\begin{algorithm}[t]
\caption{Randomized $(\beta,R,\epsilon)$-Cut Decomposition}
\label{alg:bre_cut}
\begin{algorithmic}[1]
    \STATE $\mathcal{C} \leftarrow \textsc{Ldrd}(G,R)$
    \STATE $S \leftarrow \emptyset$
    \FOR{$C \in \mathcal C$}
        \STATE Pick $\tau \in \{0,1\}$ uniformly at random
        \IF{$\tau = 1$}
            \STATE $S \leftarrow S \cup C$
        \ENDIF
    \ENDFOR
    \STATE \textbf{Return:} $(S, V\setminus S)$
\end{algorithmic}
\end{algorithm}

\begin{lemma}\label{lem:clu-to-cut}
  Given a value $1 \le R \le \Delta$,
  let $\{C_i\}_{i=1}^{k}$ be the weak decomposition
  of the graph satisfying the properties of Theorem~\ref{thm:bartal}, and 
  define the cut $S$ as
  \begin{math}
    S := \cup_{i \in [k]: x_i=1} C_i,
  \end{math}
  where $x_1, \dots, x_k$ is a sequence of i.i.d Bernoulli variables with parameter $\frac{1}{2}$.
  The cut $S$ is a $(\beta, R, \epsilon)$-distance preserving cut with $\epsilon = R/2n$.
\end{lemma}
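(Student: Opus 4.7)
The plan is to verify each of the three defining properties of a $(\beta, R, \epsilon)$-distance preserving cut separately, using the guarantees on $\{C_i\}_{i=1}^k$ from Theorem~\ref{thm:bartal} together with the independence of the Bernoulli coin flips $x_1, \dots, x_k$. The key observation throughout is that conditional on the weak decomposition, a pair $u,v$ is cut if and only if they lie in distinct clusters \emph{and} those two clusters receive different bits. So the entire analysis decomposes into a clustering event and an independent coin-flip event.

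First, for the upper-bound property, I would write
\begin{align*}
  \Pr{\cut_{u,v}} = \Pr{C(u)\neq C(v)} \cdot \tfrac{1}{2},
\end{align*}
since conditioned on $C(u)\neq C(v)$ the two independent fair bits $x_{C(u)}, x_{C(v)}$ disagree with probability exactly $1/2$ (and if $C(u)=C(v)$ the pair is never cut). Combining this with the bound $\Pr{C(u)\neq C(v)} \le \beta \cdot d_G(u,v)$ from Theorem~\ref{thm:bartal} yields $\Pr{\cut_{u,v}} \le \tfrac{\beta}{2}\, d_G(u,v) \le \beta \cdot d_G(u,v)$, giving the first property.

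Second, for the zero-cut property, I would invoke directly the additional guarantee of Theorem~\ref{thm:bartal}: if $d_G(u,v) < R/(2n)$ then $u,v$ belong to the same cluster $C_i$, so $x_{C(u)} = x_{C(v)}$ deterministically, meaning $\Pr{\cut_{u,v}} = 0$. This is exactly the statement with $\epsilon = R/(2n)$. Third, for the lower-bound property, I would use the weak-diameter side of Theorem~\ref{thm:bartal}: every cluster has weak diameter at most $R$, so if $d_G(u,v) > R$ then $u$ and $v$ cannot share a cluster, hence $C(u)\neq C(v)$ with probability $1$. Plugging into the identity from the first step gives $\Pr{\cut_{u,v}} = 1/2$, which meets the required $\ge 1/2$ bound.

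There is no real obstacle here; the proof is essentially a bookkeeping exercise that translates the three bullet points of Theorem~\ref{thm:bartal} (probability of being separated, forced co-clustering at small distances, weak diameter at most $R$) into the three bullet points of Definition~\ref{def:dis_pres} via a single independent fair coin per cluster. The only small care point is justifying the factorization $\Pr{\cut_{u,v}} = \tfrac{1}{2}\Pr{C(u)\neq C(v)}$, which follows because the coins $\{x_i\}$ are independent of the random partition $\mathcal{C}$ produced by \textsc{Ldrd} and of each other, so one can condition on $\mathcal{C}$ and then average.
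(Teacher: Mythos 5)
Your proof is correct and follows essentially the same route as the paper's: translate the three guarantees of Theorem~\ref{thm:bartal} into the three requirements of Definition~\ref{def:dis_pres} by conditioning on the clustering and then using one independent fair coin per cluster. The only (cosmetic) difference is that you use the exact factorization $\Pr{\cut_{u,v}} = \tfrac{1}{2}\Pr{C(u)\neq C(v)}$, whereas the paper uses the weaker but sufficient bound $\Pr{\cut_{u,v}} \le \Pr{C(u)\neq C(v)}$ for the first two properties; both are valid.
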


\subsection{Embedding Procedure}

We now proceed to show how to obtain an embedding of the graph using the distance preserving cuts of the previous section.
Let $\Delta$ be an upper bound on the diameter of the graph $G$. We define our embedding that builds upon Definition~\ref{def:dis_pres} as follows.
\begin{definition}
    Given a sequence of cuts
    $(S_1, \dots, S_r)$ and
    parameters $(R_1, \dots, R_r)$,
    we define the characteristic embedding
    of $\rbr{S_i, R_i}_{i=1}^{r}$ as a mapping
    $\rho: V \to \R^r$ that sets the $i$-th
    coordinate of $\rho(v)$ to $R_i$ if $v \in S_i$ and to $0$ otherwise, i.e.,
    \begin{math}
        \rho(v)_{i} := R_i \cdot \ind{v \in S_i}.
    \end{math}
\end{definition}
We note the difference our embedding procedure and the existing embedding procedures into $\ell_p$ space. The standard approach to the problem
is to use
\emph{Frechet embeddings};
each coordinate $\rho_i(v)$ is of the form
$d_{G}(v, S_i)$ for some set $S_i$.
These sets are either obtained randomly, or using the partitioning scheme of the Fakcharoenphol-Rao-Talwar (FRT) embedding \cite{frt2004}.
These procedures are not well-suited for the dynamic setting, however because of the analysis of their upper bound on $\norm{\rho(u) - \rho(v)}_p$ for every pair $u, v$. 
Specifically, in order to bound $\norm{\rho(u) - \rho(v)}_p$, the approaches rely on
\begin{align*}
    \abs{\rho_i(u) - \rho_i(v)}
    =
    \abs{d_{G}(u, S_i) - d_{G}(v, S_i)}
    \le d_{G}(u, v)
    \ ,
\end{align*}
where the inequality follows from the triangle inequality. 
In the dynamic setting however, (efficiently) maintaining distances
can only be done approximately. This means that
$\rho_i(u)$ and $\rho_i(v)$ would each be within a
$(1+\epsilon)$ factor of $d_{G}(u, S_i)$ and $d_{G}(v, S_i)$ which would result in a degradation of our guarantees when maintained dynamically.

We now leverage the key characteristics for a set of distance preserving cuts to demonstrate that the corresponding characteristic embedding preserves the original distances in the $\ell_p$-metric with only polylogarithmic distortion.

\begin{theorem}\label{thm:static}
  Given a graph $G=(V, E)$ and a parameter $\Delta$
  that is a power of $2$ satisfying $\Delta \ge \text{diam}(G)$, let
  $S_{1}, \dots, S_{\log(\Delta) +1}$ be (random) subsets of
  $V$ such that
  $S_i$ is a $(\beta_i, R_i, \epsilon_i)$-cut with
  $R_i = 2^{i-2}$ and $(\beta_i, \epsilon_i) = (O(\frac{\log^2 n}{R_i}), \Omega(\frac{R_i}{n}))$,
  and let $\rho: V \to \R^{\log \Delta + 1}$ be the characteristic embedding of these cuts. For every pair of vertices $u, v$ and any $p \in [1, \infty)$:
  \begin{align*}
    \frac14 \cdot d(u, v) \le
    \Ex{\norm{\rho(u) - \rho(v)}_{p}}
    \le
    O(\log^3 n)d(u, v).
  \end{align*}
\end{theorem}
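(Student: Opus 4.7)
The approach is to analyse $\Ex{\norm{\rho(u)-\rho(v)}_p}$ directly via the indicator variables $\ind{\cut_{u,v}^{(i)}}$ that encode whether $u$ and $v$ fall on opposite sides of $S_i$. Because each coordinate $\rho(v)_i$ takes only the values $0$ or $R_i$, one has $|\rho(u)_i - \rho(v)_i| = R_i\ind{\cut_{u,v}^{(i)}}$, so
\begin{align*}
\norm{\rho(u)-\rho(v)}_p^p \;=\; \sum_{i=1}^{\log \Delta + 1} R_i^{p}\ind{\cut_{u,v}^{(i)}}.
\end{align*}
Both bounds then reduce to invoking the three defining properties of a distance preserving cut, at a single well-chosen scale for the lower bound, and across all scales carefully for the upper bound.

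For the lower bound I would use the elementary inequality $\norm{x}_p \ge \norm{x}_\infty$, valid for every $p \in [1,\infty)$. Let $i^\ast$ be the largest index with $R_{i^\ast} < d(u,v)$. Since $R_i = 2^{i-2}$, doubling gives $R_{i^\ast} \ge d(u,v)/2$; because $d(u,v) \le \Delta$ and $R_{\log \Delta + 1} = \Delta/2$, such an $i^\ast$ always exists in the allowed range. The third defining property of the $(\beta_{i^\ast}, R_{i^\ast}, \epsilon_{i^\ast})$-cut then gives $\Pr{\cut_{u,v}^{(i^\ast)}} \ge 1/2$, so
\begin{align*}
\Ex{\norm{\rho(u)-\rho(v)}_p} \;\ge\; R_{i^\ast}\Pr{\cut_{u,v}^{(i^\ast)}} \;\ge\; \frac{R_{i^\ast}}{2} \;\ge\; \frac{d(u,v)}{4}.
\end{align*}

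For the upper bound I would use the norm comparison $\norm{x}_p \le \norm{x}_1$, which holds for $p \ge 1$ since $\norm{\cdot}_p$ is non-increasing in $p$. Taking expectations yields $\Ex{\norm{\rho(u)-\rho(v)}_p} \le \sum_i R_i\Pr{\cut_{u,v}^{(i)}}$. Each term is controlled by two ingredients: the first property gives $\Pr{\cut_{u,v}^{(i)}} \le \min\bigl(1, O(\log^2 n / R_i)\cdot d(u,v)\bigr)$, and the second property forces $\Pr{\cut_{u,v}^{(i)}} = 0$ whenever $d(u,v) < \epsilon_i$, i.e., whenever $R_i > C n\cdot d(u,v)$ for an appropriate constant $C$. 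I would split the sum according to the size of $R_i$: for small $R_i \le d(u,v)$, apply $\Pr \le 1$ and sum the geometric series in $R_i$, which contributes $O(d(u,v))$; for middle $d(u,v) < R_i \le C n\cdot d(u,v)$, apply the $\beta_i$ bound, so each term contributes $R_i \cdot O(\log^2 n / R_i)\cdot d(u,v) = O(\log^2 n)\cdot d(u,v)$, and since there are only $O(\log n)$ powers of two in this multiplicative window, the aggregate is $O(\log^3 n)\cdot d(u,v)$; for large $R_i > Cn\cdot d(u,v)$ the contribution is zero by the $\epsilon$-property. Combining yields the claimed upper bound $\Ex{\norm{\rho(u)-\rho(v)}_p} \le O(\log^3 n)\cdot d(u,v)$.

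The main subtlety, and the entire reason for introducing the $\epsilon$-parameter in Definition~\ref{def:dis_pres}, appears in the upper bound. Without the $\epsilon = \Omega(R/n)$ guarantee, the summation would naively range over all $\log \Delta + 1 = \Theta(\log(nW))$ scales, each potentially contributing $O(\log^2 n)\cdot d(u,v)$, so the final distortion would degrade to $O(\log^2 n \cdot \log(nW))$. The $\epsilon$-property truncates the range of active scales to a multiplicative window of width $O(n)$ around $d(u,v)$, contributing only $O(\log n)$ nontrivial indices, and it is precisely this truncation that keeps the final dependence polylogarithmic purely in $n$.
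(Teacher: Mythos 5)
Your proof is correct and takes essentially the same approach as the paper's: the lower bound via $\norm{\cdot}_\infty$ at the single scale $i^\ast$ just below $d(u,v)$, and the upper bound via $\norm{\cdot}_1$ with the same three-way case split on $R_i$ relative to $d(u,v)$ (small scales summed geometrically, middle scales bounded with $\beta_i$ over an $O(\log n)$-width window, and large scales killed by the $\epsilon$-property). Your closing remark explaining why the $\epsilon$-parameter is what limits the active window to $O(\log n)$ scales and prevents a $\log(nW)$ factor is a correct reading of the paper's intent and matches its reasoning.
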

\vspace{-2mm}
We note that the constant $1/4$ can be easily removed by multiplying the embedding vectors by $4$. 
Equipped with this static embedding that only distorts pairwise distances by a polylogarithmic factor, we proceed to adapt the structure to efficiently modify the cut-set decomposition of $G$ through a sequence of (adversarially chosen) edge weight increases.

\section{Dynamic Algorithm}
\label{sec:dynamic}
In this section, we prove Theorem~\ref{thm:dyn-emb}\footnote{Because of the page limit we avoid repeating statements of longer theorems.}. We do it by constructing an algorithm that dynamically maintains an embedding of a metric induced by a graph $G$ into $\ell_{p}$ space of dimension $O(\log{\Delta})$.

Our construction starts by observing a reduction. Informally, in the next theorem we show that in order to maintain dynamically the desired embedding, it is enough to have a dynamic algorithm that for every $1 \le R \le 2\Delta$ maintains a $(O(\frac{\log^2{n}}{R}), R, \Omega(\frac{R}{n}))$-distance preserving cut.
\begin{theorem}\label{thm:dynamic_if_cut}
Assume we are given an algorithm $\mA$
that takes as input the parameter
$R$ and decrementally maintains
a $(\beta, R, \epsilon)$ distance preserving cut $S$
for a graph $G$ undergoing edge weight increases, outputting changes to $S$ after each such update, where $\beta := O(\frac{\log^2{n}}{R})$ and $\epsilon = \Omega(\frac{R}{n})$.
Assume further that
the total running time of the algorithm $\mA$ is bounded by
$t(m, n)$ whp. Then there is a decremental dynamic algorithm that maintains an embedding of
the vertices $\rho: V \to \R^{\log{\Delta} + 1}$, where
$\Delta$ is a power of $2$ always satisfying
$\Delta \ge \text{diam}(G)$\footnote{For instance, we can set $\Delta$ to be the smallest power of $2$ larger than $nW$.},
that
has expected (over the internal randomization of the algorithm) distortion
at most $O(\log^3{n})$ and running time at most $O\left(t(m, n)\log{\Delta} \right)$, whp. 
\end{theorem}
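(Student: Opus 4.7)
The plan is to reduce the dynamic embedding problem to $O(\log\Delta)$ parallel invocations of the assumed cut-maintenance subroutine $\mA$, one per geometric scale. For each $i \in \{1, \ldots, \log\Delta + 1\}$, I would spawn an independent copy $\mA_i$ with parameter $R_i = 2^{i-2}$, feeding all copies the same stream of edge-weight increases. By assumption, $\mA_i$ decrementally maintains a $(\beta_i, R_i, \epsilon_i)$-distance preserving cut $S_i$ with $\beta_i = O(\log^2 n / R_i)$ and $\epsilon_i = \Omega(R_i / n)$, and outputs each change to $S_i$ as it occurs.

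The maintained embedding $\rho: V \to \R^{\log\Delta + 1}$ will be precisely the characteristic embedding of the cuts $(S_i, R_i)_{i}$ defined in the previous section: $\rho(v)_i = R_i \cdot \ind{v \in S_i}$. To keep $\rho$ in sync, whenever $\mA_i$ reports that a vertex $v$ has switched sides of the $i$-th cut, I flip the $i$-th coordinate of $\rho(v)$ between $0$ and $R_i$ in $O(1)$ time and forward this as an output change. Since the random bits used across the different copies $\mA_i$ are drawn independently, the joint distribution of the cuts is exactly the one assumed by Theorem~\ref{thm:static}, so the expected distortion bound of $O(\log^3 n)$ transfers directly and no additional analytic work is needed for correctness.

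For the running time, the total work is the sum over $i$ of the running time of $\mA_i$, plus $O(1)$ overhead per reported cut change (which is itself already charged to $\mA_i$'s running time, since the subroutine must spend at least unit time to report each change). Each copy runs in $t(m,n)$ time whp, so by a union bound across the $O(\log\Delta)$ copies, the overall running time is $O(t(m, n) \cdot \log \Delta)$ whp. Distance queries are answered in $O(\log\Delta)$ time by reading off and comparing the two embedded vectors coordinate by coordinate.

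The main subtlety I expect is accounting for the output of embedding changes inside the claimed budget: since the number of coordinate updates equals the number of cut-membership changes across all scales, and each $\mA_i$ is required to report those changes within its own running time, the total output cost is already absorbed into $\sum_i t(m,n) = O(t(m,n) \log\Delta)$. A secondary point is that the high-probability bound on each $\mA_i$ composes across the $O(\log\Delta)$ copies via a union bound with only a constant loss in the failure exponent, which is absorbed into the definition of \emph{with high probability}. No other obstacles arise, because the reduction is purely structural and inherits both correctness and efficiency from the assumed subroutine.
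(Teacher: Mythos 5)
Your proposal matches the paper's proof essentially verbatim: spawn one copy of $\mA$ per geometric scale, maintain the characteristic embedding of the resulting cuts, invoke Theorem~\ref{thm:static} for the distortion bound, and charge the $O(1)$ per-coordinate updates to the subroutines' running time. The only cosmetic difference is that you use $R_i = 2^{i-2}$ (consistent with Theorem~\ref{thm:static}) where the paper's proof writes $R_i = 2^i$; this affects only constants and your choice is the cleaner one.
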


We now show how to maintain a distance preserving cut dynamically, which in turn leads to a dynamic embedding algorithm via Theorem~\ref{thm:dynamic_if_cut} and completes the proof of the main result, Theorem~\ref{thm:dyn-emb}. We start by observing that a $(\beta, \delta)$-weak decomposition of $G$ can be dynamically maintained. We here highlight that the authors of~\cite{forster2021dynamic}, building upon the approach of~\cite{chechik2020dynamic}, have already proved that a $(\beta, \delta)$-weak decomposition of $G$ can be dynamically maintained under \textit{edge deletions} (Corollary 3.8). The proof of our observation involves adapting their techniques to the slightly modified definition of dynamic changes we invoke here to handle the continuous nature of $\ell_p$ space. %

\begin{lemma}\label{lem:dec-decomp}
For every $\beta \in (0, 1)$ and
$\delta = (6(a + 2)(2 + \log m) \ln n)\beta^{-1} = O(a\beta^{-1} \log^2 n)$, where $a \ge 1$ is a given constant controlling the
success probability, there is a decremental algorithm to maintain a probabilistic weak $(\beta, \delta)$-decomposition of a weighted, undirected graph undergoing increases of edge weights that with high probability has total update time $O(m^{1+o(1)} \log W + Q \log n)$, where $Q$ is the total number of updates to the input graph, and (within this running time) is able to report all nodes and incident edges of every cluster that is formed. Over
the course of the algorithm, each change to the partitioning of the nodes into clusters happens by splitting an existing cluster into two or several clusters and each node changes its cluster at most
$O(\log n)$ times.
\end{lemma}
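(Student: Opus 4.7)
The plan is to leverage Corollary 3.8 of \cite{forster2021dynamic}, which establishes essentially the same statement in the pure edge-deletion model, and argue that only minor technical changes are needed to accommodate edge-weight increases. The key observation is that both operations are monotone in the same direction: they can only increase the pairwise distances $d_{G}(u,v)$. Consequently, any ball-growing argument that certifies a weak decomposition remains valid once we reinterpret ``a deleted edge'' as ``an edge whose weight has grown.'' Because of this monotonicity, clusters can only be split during the run of the algorithm; they are never merged, which already gives the qualitative structure claimed in the lemma.

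First I would recall the high-level structure of the algorithm of \cite{forster2021dynamic}: it maintains a recursive ball-growing clustering built by picking random centers and random radii drawn from a truncated exponential distribution of mean $\Theta(\beta^{-1})$, capped so that every cluster has weak diameter at most $\delta$. The Bartal-style region-growing analysis gives $\Pr{C(u)\ne C(v)} \le \beta\cdot d(u,v)$, while the cap enforces the diameter bound. Dynamically, the clusters are represented via approximate Even--Shiloach (ES) trees rooted at the centers, which support edge changes in amortized $\tilde{O}(1)$ work per affected vertex.

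Next I would adapt the ES-tree component from the pure-deletion setting to handle weight increases. ES-trees supporting edge weight increases are standard and admit the same $\tilde{O}(m)$ total work per tree provided edge weights evolve monotonically, which is the regime we are in. The $\log W$ factor in the statement arises from the fact that, after rounding weights to a geometric grid with ratio $1+1/\mathrm{poly}(n)$, each edge can pass through at most $O(\log W)$ distinct threshold values at which it must be revisited; combined with the $m^{1+o(1)}$ work of the low-diameter framework this yields the $m^{1+o(1)}\log W$ term. The $Q\log n$ term is obtained by charging $O(\log n)$ bookkeeping work (one per hierarchy level) to each individual update that does not trigger a structural change.

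Then I would verify that the two statistical properties are preserved pointwise on the current graph. The weak-diameter invariant $\wdiam(C)\le\delta$ is maintained because the algorithm explicitly splits any cluster whose internal distances grow past the threshold, and each split strictly reduces the size of the affected sub-cluster; this also yields the $O(\log n)$ bound on the number of cluster changes per node via the standard potential argument $\Phi = \sum_v \log|C(v)|$. The separation-probability bound is preserved because the analysis of \cite{bartal1996probabilistic} uses only the memoryless property of the radius distribution measured against the \emph{current} metric, and weight increases only enlarge $d(u,v)$.

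The main obstacle, and the bulk of the actual write-up, will be the careful bookkeeping needed to certify that the approximate-ball data structures of \cite{forster2021dynamic} admit continuous weight increases without cascading recomputations, and that the discretization of weight increases to an $O(\log W)$-sized grid introduces only the stated multiplicative overhead while still producing an output cut that is valid for the \emph{true} (non-discretized) weighted metric. Once these bookkeeping details are settled, the rest is a direct translation of the framework in \cite{forster2021dynamic}.
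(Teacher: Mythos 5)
Your proposal follows essentially the same route as the paper: both adapt the edge-deletion algorithm of Forster et al.\ by observing that weight increases share the key monotonicity property and by swapping in a decremental SSSP data structure that tolerates weight increases (the paper specifically invokes the SSSP of Henzinger--Krinninger--Nanongkai rather than plain ES-trees, which is what actually delivers the $m^{1+o(1)}\log W$ bound). The remaining reasoning --- the split-only cluster evolution, the $O(\log n)$ relocations per vertex via halving cluster sizes, and the $O(Q\log n)$ parsing charge --- matches the paper's argument.
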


Equipped with this tool, we can present the main contribution of this section - the maintenance of a $(\beta, R, \epsilon)$-distance preserving cut under dynamic edge weights increases.
\begin{lemma} \label{lem:dyn-cut-lem}
For every $0 \le R \le 2\Delta$, there is a decremental dynamic algorithm that maintains a $\left(\beta, R, \epsilon \right)$-distance preserving cut a of weighted, undirected graph $G$ where $\beta = O(\log^2(n)/R)$ and $\epsilon = \Omega(R/n)$. Its total update time is $O(m^{1+o(1)} \log^2 W + Q \log n)$ with high probability, where $Q$ is the total number of updates to the input graph, and, within this running time, explicitly reports all changes to the maintained cut.
\end{lemma}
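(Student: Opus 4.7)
The plan is to dynamize Algorithm~\ref{alg:bre_cut} by combining the decremental weak decomposition of Lemma~\ref{lem:dec-decomp} with an on-the-fly random tagging of clusters. I would instantiate Lemma~\ref{lem:dec-decomp} with parameters chosen so that $\beta_{\text{decomp}}=\Theta(\log^2 n/R)$ and $\delta=R$ (tuning the constant $a$ and absorbing constants into $R$), and maintain the resulting clustering $\mC$ across the entire update stream. On top of $\mC$, I would attach to each current cluster $C$ a fresh independent Bernoulli$(1/2)$ tag $x_C\in\{0,1\}$ and maintain $S=\bigcup_{C:\,x_C=1}C$. To secure the $\epsilon$-property for vertex pairs at distance below $R/(2n)$, I would prepend a contraction layer: edges of current weight below $R/(2n)$ are contracted, and any such edge crossing the threshold upward (a monotone event, since weights only grow) is handed to the decomposition algorithm as an ordinary weight-increase update in the contracted graph.

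The key step is to handle the structural updates reported by Lemma~\ref{lem:dec-decomp}. The crucial property there is that the partition changes only by clusters splitting, never merging, and that each vertex is reassigned at most $O(\log n)$ times overall. Whenever a cluster $C$ splits into sub-clusters $C^{(1)},\dots,C^{(j)}$, I would sample fresh independent tags $x_{C^{(i)}}$ and toggle the membership of every vertex whose new tag differs from its old one, thereby updating $S$ in time proportional to the reassigned nodes. Because the new tags are drawn independently of all prior randomness and of the oblivious adversary's update sequence, at every fixed time $t$ the joint marginal distribution of $(\mC,S)$ coincides exactly with the distribution obtained by running Algorithm~\ref{alg:bre_cut} statically on $G_t$. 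Consequently Lemma~\ref{lem:clu-to-cut} applies pointwise, and $S$ is a $(\beta,R,\epsilon)$-distance preserving cut with $\beta=O(\log^2 n/R)$ and $\epsilon=R/(2n)$ at every moment in time.

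For the running-time analysis, the dominant cost is absorbed by Lemma~\ref{lem:dec-decomp}, which gives $O(m^{1+o(1)}\log W+Q\log n)$ total update time. Since each vertex is reassigned $O(\log n)$ times, the total number of sub-cluster creations is $O(n\log n)$, so the extra bookkeeping for sampling fresh tags and toggling $S$ contributes at most $O(Q\log n)$ overall and is already subsumed by the cost of reporting the partition changes. A small additional $\log W$ factor enters from instantiating the decomposition at the stricter scale $\beta_{\text{decomp}}=\Theta(\log^2 n/R)$ and from the contraction layer, yielding the claimed $O(m^{1+o(1)}\log^2 W+Q\log n)$ bound. The main obstacle will be the coupling argument in paragraph~2: one must carefully verify that the freshly-sampled tags on newly-created sub-clusters are independent of everything that the adversary and the decomposition algorithm have already revealed, so that Lemma~\ref{lem:clu-to-cut} can be applied marginally at every time $t$ rather than only at initialization. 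Once this coupling is in place, both the cut property and the running-time bound follow directly.
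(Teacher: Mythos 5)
Your proposal follows essentially the same architecture as the paper's proof: maintain a decremental $(\beta,\delta)$-weak decomposition via Lemma~\ref{lem:dec-decomp} on a graph with tiny edges neutralized, attach independent Bernoulli$(1/2)$ tags to clusters, resample tags freshly upon every split, and invoke Lemma~\ref{lem:clu-to-cut} together with the obliviousness of the adversary to conclude that the marginal law of the cut at every time step matches the static one. Two details, however, deviate in ways worth flagging.

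First, you implement the $\epsilon$-property via a \emph{contraction} layer, whereas the paper instead replaces every edge of weight at most $\epsilon$ by a weight-$0$ edge in an auxiliary graph $G'$ (same vertex set, same edges, only the weight function changes). This is more than a cosmetic difference. If you genuinely merge vertices, then when an edge weight later crosses $\epsilon$ you must un-merge the corresponding supernode, which from the decomposition algorithm's perspective is a vertex split and edge insertion --- an \emph{incremental} operation that Lemma~\ref{lem:dec-decomp} does not support. Your sentence that such an event is ``handed to the decomposition algorithm as an ordinary weight-increase update in the contracted graph'' only makes sense under the zeroing interpretation: an edge whose stored weight was $0$ has its weight raised to its new value. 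If that is what you mean, you should say so; if you mean literal contraction, there is a genuine gap. The paper's choice keeps the vertex set fixed precisely to avoid this issue.

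Second, you propose to run Lemma~\ref{lem:dec-decomp} at scale $\delta = R$. This does not quite suffice: clusters of $G'$ with weak diameter at most $\delta$ may have weak diameter up to $\delta + n\epsilon$ when measured in $G$ (each of up to $n$ neutralized edges on a shortest path can reintroduce up to $\epsilon$). The paper therefore runs the decomposition at scale $R/2$ so that, with $\epsilon \le R/(2n)$, the resulting clusters have weak diameter at most $R$ in $G$, which is what Definition~\ref{def:dis_pres} requires. Your remark about ``absorbing constants into $R$'' gestures at this, but the factor-of-two rescaling should be made explicit, since the $R_i = 2^{i-2}$ scaling in Theorem~\ref{thm:static} is already tuned to exactly this choice.

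The remainder --- the retagging on splits, the independence argument, and the charge of the bookkeeping time to the $O(n\log n)$ total reassignments of Lemma~\ref{lem:dec-decomp} --- matches the paper and is correct.
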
 
The synthesis of these two lemmas with the result of Theorem~\ref{thm:dynamic_if_cut} yields the overall dynamic embedding of Theorem~\ref{thm:dyn-emb}.

\section{Conclusion}
We here present the first dynamic embedding into $\ell_p$ space which is equipped to handle edge weight increases -- a non-trivial extension of the seminal Bourgain and JL embedding results \cite{bourgain1985lipschitz,jl1984}. Most notably, our embeddings produce only a polylogarithmic distortion of the base metric and exhibit an update time on par with the best known results for the APSP and other embedding based problems. Our embedding procedure additionally reports any modifications within polylogarithmic time and is naturally well suited to the class of NP-hard optimization problems which rely on Euclidean geometry for approximations to the optimal solution. To supplement our algorithmic result, we further present a lower bound for the fully dynamic setting where edge weights can be increased or decreased. In particular, we show that no algorithm can achieve a low distortion with high probability without inheriting an update time of $\Omega(n)$ which makes the procedure inefficient in practice.

%

\section*{Impact Statement}
This paper presents work whose goal is to advance the field of Machine Learning. There are many potential societal consequences of our work, none which we feel must be specifically highlighted here.

\section*{Acknowledgements}
We thank the anonymous reviewers for their valuable feedback. 
This work is Partially supported by DARPA QuICC, ONR MURI 2024 award on Algorithms, Learning, and Game Theory, Army-Research Laboratory (ARL) grant W911NF2410052, NSF AF:Small grants 2218678, 2114269, 2347322
Max Springer was supported by the National Science Foundation Graduate Research Fellowship Program under Grant No. DGE 1840340. Any opinions, findings, and conclusions or recommendations expressed in this material are those of the author(s) and do not necessarily reflect the views of the National Science Foundation.

\bibliographystyle{icml2024}
\bibliography{ref}

\newpage
\onecolumn
\appendix

\section{Empirical validation}
We tested the theoretical algorithm guarantees on three different graphs.

\textit{Data sets preparation.} As the backbone for each graph, we used the social network of LastFM users from Asia available in the Stanford Network Analysis Project dataset (SNAP)~\cite{snapnets}. To adhere to our dynamic setting, we randomly chose a subset of 150, 300, and 600 connected nodes to form three different bases of the dynamically changing network. We added random weights from a uniform distribution to these graphs. We augmented each graph by respectively 10000, 5000, and 1000 changes to the topology (queries). Each change increases the weight of a randomly and uniformly chosen edge of the graph by a number chosen from a uniform distribution whose range increases as the process progresses. 

\textit{Evaluation.} We implemented the cut-preserving embedding from Theorem~\ref{thm:dyn-emb} and computed the distances between every pair of nodes in the graph after each query. We compared the average of these distances with the average distances computed by an exact algorithm that in an offline fashion computes the shortest distances after each query. Visualized results are presenting in Figure~\ref{fig:figure3}.
\begin{figure}[htp]
\centering
\includegraphics[width=.32\textwidth]{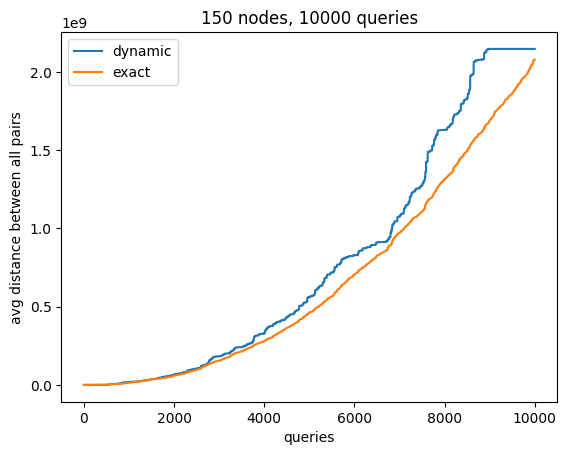}\hfill
\includegraphics[width=.32\textwidth]{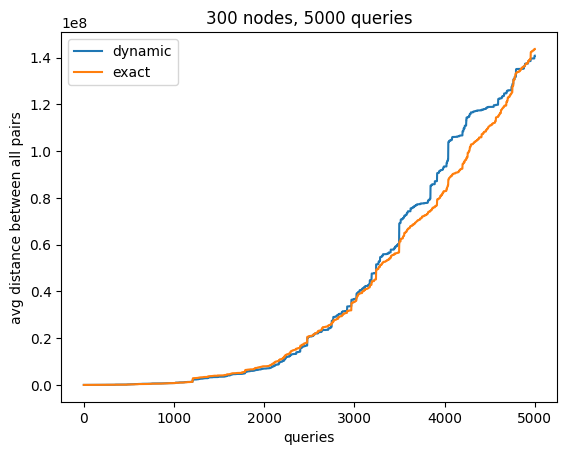}\hfill
\includegraphics[width=.32\textwidth]{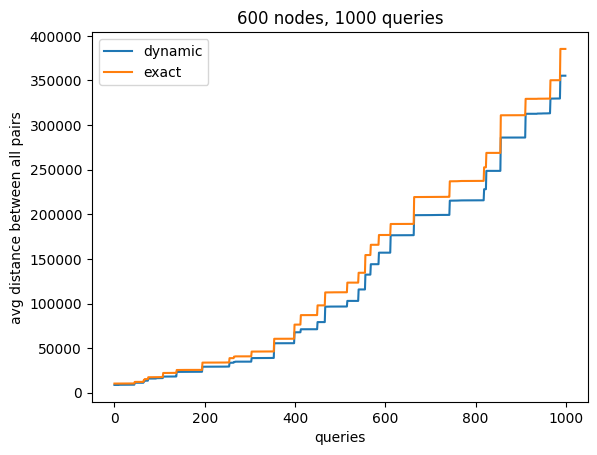}

\caption{Visualization of average distances in a dynamically changing metric. The orange line represents the average distance between all pairs of nodes computed exactly, using a deterministic shortest path algorithm, after every query. The blue line represents the average distance computed based on the embedding given by the dynamic embedding algorithm proposed in the paper.}
\label{fig:figure3}
\end{figure}
To allow more direct reasoning about the distortion of our embedding, in Figure~\ref{fig:figure-ratio} we provide plots representing, after each query, the ratio of the average distance based on our dynamic embedding to the average distance computed exactly. 
We would like to note that even though theoretically our embedding is not contractive, this property holds in expectation. In practice, small fluctuations may appear which are particularly visible in the case of a small number of queries.
\begin{figure}[htp]
\centering
\includegraphics[width=.32\textwidth]{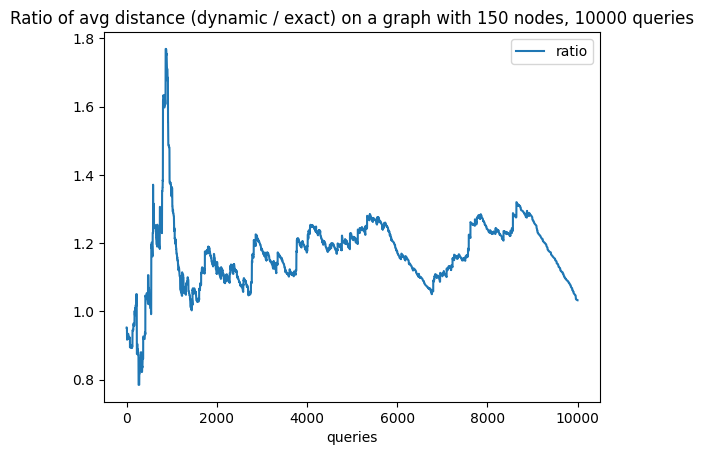}\hfill
\includegraphics[width=.32\textwidth]{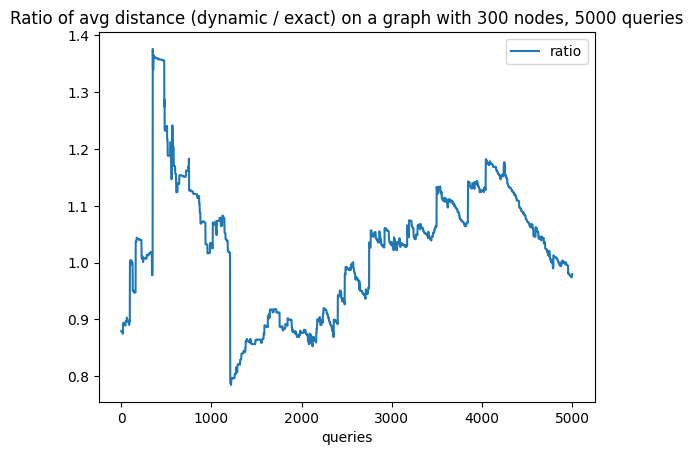}\hfill
\includegraphics[width=.32\textwidth]{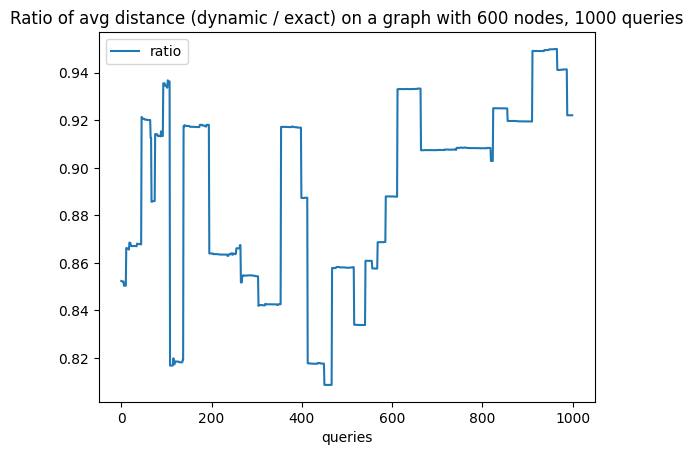}
\caption{The ratio of the average distance between all pairs of points computed by of our embedding to the exact average distance between all pairs, after each query.}
\label{fig:figure-ratio}
\end{figure}

\textit{Conclusions.} We observed that in these experiments the achieved stretch is within a constant factor of the average over the exact distances which adheres to the $O(log^2(n))$ theoretical bound of Theorem~\ref{thm:dyn-emb} and even surpasses it. This advantage might be a consequence of a couple of things, e.g. the random process involved in the generation, or a few number of testing instances. Nevertheless, we find this results promising. We hope that they can serve as a starting point for further investigation of practicality of dynamic embedding.

\section{Omitted Proofs}
\subsection{Static Algorithm}
We briefly provide a sketch of for the proof of Theorem~\ref{thm:bartal} below and refer to Bartal~\cite{bartal1996probabilistic} for the full details. 

\begin{proof}[Proof sketch]
We construct the so-called ``low diameter randomized decomposition'' (LDRD) for the input graph $G$ with (parameter $R$) via the following high-level procedure: first, contract all paths between vertices that are shorter than $\frac{R}{2n}$. Now, starting with the contracted graph we pick an arbitrary vertex, $u$, and select a radius $r$ sampled from the geometric distribution with success parameter $\beta$; if $r \ge R$ then we simply replace it with $R$. Note that, with high probability, $r$ will not be replaced because of the choice of the parameters.
Mark all unmarked vertices contained within the set $B_r(u) = \{v \in G : d(u,v) \le r\}$ and define this to be a new cluster of the partition. Repeat this ball-growing process with the next unmarked vertex in $G$ and only consider the remaining unmarked vertices to be included in future clusters. This procedure is repeated until all vertices are marked and return the created clusters. Pseudocode for this is provided in Algorithm~\ref{alg:bartal}. 

To verify the theorem, we proceed to prove each property of a $(\beta,R)$-weak decomposition holds with the additional probabilistic bound. The second property holds by contraction of small edges. Therefore, we need only obtain an upper bound on the probability that the two vertices $u$ and $v$ are not assigned to the same cluster, ie., $C(u) \neq C(v)$. This final point is a standard procedure attributed to \cite{bartal1996probabilistic} which we overview below.

For some edge $e = (u,v)$, we need to bound the probability that $e$ is contained within none of the clusters of the decomposition. Specifically, we need to ensure that after each ball growing procedure (Line 8 of Algorithm~\ref{alg:bartal}) $e$ is not added to a cluster. We can therefore decompose this value as the probability that exactly one end point of $e$ was contained in the last ball grown or neither is contained. Let $t$ denote the stage of the ball growing procedure and let $X_t$ denote the event that edge $(u,v) \notin C_j$ for all $j \le t$. Then we can recursively compute $\Pr{X_t}$ as the probability that exactly one of $u,v$ are contained in $C_t$, or the probability that neither is as a condition on $\Pr{X_{t+1}}$. The direct computation of these values using the probability density functions for the radius random variable's distribution and bounding the probability that an endpoint of $(u,v) \in C_t$ is contained in Section 3 of \cite{bartal1996probabilistic} and yields the desired $O\left( \frac{\log n}{R}\right)$ upper bound on the probability that $C(u) \neq C(v)$.

\end{proof}

\begin{proof}[Proof of Lemma~\ref{lem:clu-to-cut}]
  By the guarantee on the partition given by Theorem~\ref{thm:bartal}, for every pair of vertices $u,v \in V$ we must have that 
  $\Pr{u \text{ and } v \text{ in different clusters}} \le 
  \beta \cdot d(u,v)$,
  and 
  $\Pr{u \text{ and } v \text{ in different clusters}} =0$
  for all $u, v$ such that $d_{G}(u, v) < \frac{R}{n}$. 
  Note that if the two nodes are in the same cluster than they must
  be on the same side of the cut. 
  Therefore, the first two properties of $(\beta, R)$-cuts are proved.

  Now, further assume that $u,v \in V$ such that $d(u,v) > R$. By construction,
  for all
  $C_i$, have $\wdiam(C_i) \le R$. Therefore, the vertices $u$ and $v$
  cannot be contained in the same cluster. Let $C_u$ and $C_v \ne C_u$ denote
  the clusters containing $u$ and $v$ respectively. 
  By construction of the cut set $S$, we have that $\Pr{\cut_{C_u, C_v}} =
  \frac12$. Therefore, $\Pr{\cut_{u, v}} \ge 1/2$ as claimed.
\end{proof}

\begin{proof}[Proof of Theorem~\ref{thm:static}]
We divide the proof into a few key Claims.
For every pair of vertices $u, v$, define
$d'_p(u, v) := \norm{\rho(u) - \rho(v)}_{p}$. 
We first lower bound $\Ex{d'_p(u, v)}$.

\begin{claim}
For every pair of vertices $u, v\in V$:
\begin{align*}
  \Ex{d'_p(u, v)} \ge \Ex{d'_{\infty}(u, v)} \ge \frac{d(u, v)}{4}
  \ .
\end{align*}
\end{claim}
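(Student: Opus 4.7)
The plan is to handle the two inequalities separately, since they rely on quite different observations. The first inequality, $\Ex{d'_p(u,v)} \ge \Ex{d'_\infty(u,v)}$, is a direct consequence of the pointwise $\ell_p$ norm comparison $\|x\|_p \ge \|x\|_\infty$ valid for every $p \in [1, \infty)$ and every $x \in \R^d$: indeed, $\|x\|_p^p = \sum_i |x_i|^p \ge \max_i |x_i|^p = \|x\|_\infty^p$. Applying this to the random vector $\rho(u) - \rho(v)$ and taking expectations gives the inequality.

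For the second inequality, the first step is to rewrite $d'_\infty(u,v)$ in a more useful form. By the definition of the characteristic embedding, the $i$-th coordinate of $\rho(u) - \rho(v)$ equals $R_i \cdot (\ind{u \in S_i} - \ind{v \in S_i})$, whose absolute value is exactly $R_i \cdot \ind{\cut^{(i)}_{u,v}}$, where $\cut^{(i)}_{u,v}$ is the event that $u$ and $v$ are separated by the $i$-th cut $S_i$. Therefore
\begin{equation*}
    d'_\infty(u,v) \;=\; \max_{1 \le i \le \log\Delta + 1} R_i \cdot \ind{\cut^{(i)}_{u,v}} \;\ge\; R_{i^{*}} \cdot \ind{\cut^{(i^*)}_{u,v}}
\end{equation*}
for any fixed index $i^{*}$.

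The main step is then to select $i^{*}$ cleverly so that (a) $R_{i^*}$ is a constant factor of $d(u,v)$, and (b) the third property of a distance preserving cut guarantees the cut event has probability at least $\tfrac{1}{2}$. I would take $i^{*}$ to be the largest index in $\{1, \dots, \log \Delta + 1\}$ such that $R_{i^{*}} < d(u, v)$. This index exists because edge weights are positive integers, so $d(u, v) \ge 1 > \tfrac12 = R_1$. Moreover $R_{i^*} \ge d(u,v)/2$: if $i^{*} < \log \Delta + 1$ this follows from maximality of $i^*$ together with $R_{i^*+1} = 2 R_{i^*} \ge d(u,v)$, and if $i^{*} = \log \Delta + 1$ it follows from $R_{i^*} = \Delta/2$ and $\Delta \ge \text{diam}(G) \ge d(u,v)$.

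Now the third property of a $(\beta_{i^*}, R_{i^*}, \epsilon_{i^*})$-cut gives $\Pr{\cut^{(i^*)}_{u,v}} \ge \tfrac{1}{2}$ because $d(u,v) > R_{i^*}$ by the choice of $i^{*}$, so
\begin{equation*}
    \Ex{d'_\infty(u,v)} \;\ge\; R_{i^*} \cdot \Pr{\cut^{(i^*)}_{u,v}} \;\ge\; \tfrac{d(u,v)}{2} \cdot \tfrac{1}{2} \;=\; \tfrac{d(u,v)}{4}.
\end{equation*}
There is no real technical obstacle; the claim essentially reduces to choosing the coordinate with $R_i$ in the right dyadic range for $d(u,v)$, which is exactly why the sequence $R_i = 2^{i-2}$ spans all scales up to $\Delta$.
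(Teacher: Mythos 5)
Your proof is correct and follows essentially the same approach as the paper: you select the dyadic scale $i^{*}$ with $R_{i^*}$ just below $d(u,v)$, apply the third property of a distance-preserving cut to get a separation probability of at least $\tfrac12$, and conclude $\Ex{d'_\infty(u,v)} \ge R_{i^*}/2 \ge d(u,v)/4$. The only difference is that you spell out the boundary check that $i^{*}$ lies in $\{1,\dots,\log\Delta+1\}$ slightly more explicitly than the paper does; the substance is the same.
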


\begin{proof}    
    The first inequality holds for any embedding $\rho$ since
    $\norm{.}_{p}$ is decreasing in $p$; we therefore focus on the second inequality. 
    We assume that $d(u, v) \ne 0$ as otherwise the claim is trivially true.
    
    Fix $u, v$ and  
    set $i$ to be the maximal integer satisfying
    $2^{i-2} < d(u, v)$. 
    By definition of $i$, we must have
    $2^{i - 1} \ge d(u, v)$.
    We note that
    $i \in [\log \Delta + 1]$;
    we have $i\ge 1$ because
    $2^{1 - 2} < 1 \le d(u, v)$ and
    we have $i < \log \Delta + 2$ because
    $2^{(\log \Delta + 2) -2} =\Delta \ge d(u, v)$.

    Consider the cut $S_i$. 
    Since $d(u, v) > R_i$,
    by definition of a distance
    preserving cut, we have 
    $\Pr{\cut_{u, v}(S_i)} \ge \frac{1}{2}$ which implies
    that with probability at least $\frac{1}{2}$,
    the $i$-th coordinate 
    $\rho(u)$ and $\rho(v)$
    different. This implies
    the claim by the definition of the
    characteristic embedding.
    Formally,
    \begin{align*}
        \Ex{d'_{\infty}(u, v)} &\ge
        \Ex{\abs{\rho_{i}(u) - \rho_i(v)}}
        &\EqComment{Definition of $\norm{.}_{\infty}$}
        \\&=
        R_i \cdot \Pr{\cut_{u, v}(S_i)}
        &\EqComment{Definition of $\rho_i$}
        \\&\ge
        \frac{R_i}{2}
        &\EqComment{Assumption on $S_i$}
        \\&\ge
        \frac{d(u, v)}{4}
        &\EqComment{Since $2R_i \ge d(u, v)$}
        .
    \end{align*}
    thus, proving the claim.
\end{proof}
We now upper bound $\Ex{d'_p(u, v)}$.
\begin{claim}
For every pair of vertices $u,v \in V$:
    \begin{align*}
        \Ex{d'_p(u, v)} \le
        \Ex{d'_1(u, v)} \le O(\log^2n) \cdot d(u, v)
        \ .
    \end{align*}
\end{claim}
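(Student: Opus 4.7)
The plan is to handle the two inequalities separately. The first inequality, $\Ex{d'_p(u,v)} \le \Ex{d'_1(u,v)}$, follows from the elementary fact that the $\ell_p$ norms are non-increasing in $p$ on a finite-dimensional vector, so $\|x\|_p \le \|x\|_1$ whenever $p \ge 1$; this requires no use of the cut structure. All of the work goes into the $\ell_1$ bound.

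The first step for the main inequality is to unpack the definition of the characteristic embedding and apply linearity of expectation. Since $|\rho_i(u) - \rho_i(v)|$ equals $R_i$ precisely on the event $\cut_{u,v}(S_i)$ and is $0$ otherwise,
\[
\Ex{d'_1(u,v)} \;=\; \sum_{i=1}^{\log\Delta + 1} \Ex{|\rho_i(u) - \rho_i(v)|} \;=\; \sum_{i=1}^{\log\Delta + 1} R_i \cdot \Pr{\cut_{u,v}(S_i)}.
\]
The crux of the proof is that all three properties of a $(\beta_i, R_i, \epsilon_i)$-distance preserving cut must be invoked in concert: using only the $\beta_i$-bound $\Pr{\cut_{u,v}(S_i)} \le \beta_i \cdot d(u,v) = O(\log^2 n / R_i)\,d(u,v)$ gives $O(\log^2 n)\,d(u,v)$ per term, which summed over all $\log\Delta$ terms is far too weak (it would scale with $\log(nW)$). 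The idea is to partition the indices into three regimes by the size of $R_i$ relative to $d(u,v)$.

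In the first regime, for ``small'' $R_i$ (roughly $R_i \le d(u,v)$), the bound $\beta_i \cdot d(u,v)$ may exceed $1$, so I would instead use the trivial bound $\Pr{\cut_{u,v}(S_i)} \le 1$. The contribution from these indices is then $\sum R_i$, which is a geometric sum bounded by $O(d(u,v))$. In the second, ``medium'' regime where $d(u,v) \le R_i \le c \cdot n \cdot d(u,v)$, I apply the $\beta_i$-bound, giving $R_i \cdot \Pr{\cut_{u,v}(S_i)} \le O(\log^2 n)\,d(u,v)$ per term; since the $R_i$ double geometrically and the window spans only a factor of roughly $n$, the number of such indices is $O(\log n)$, yielding a total contribution of $O(\log^3 n)\,d(u,v)$. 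In the third, ``large'' regime where $R_i > c\cdot n \cdot d(u,v)$, we have $d(u,v) < \epsilon_i = \Omega(R_i/n)$, so the third property of the distance preserving cut forces $\Pr{\cut_{u,v}(S_i)} = 0$, and these terms contribute nothing.

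Summing the three contributions gives the claimed polylogarithmic bound. The main obstacle, and the step I expect to be the most delicate, is the cutoff provided by the $\epsilon_i$-property in the third regime: without it, the sum would accumulate over all $\log\Delta$ scales rather than only the $O(\log n)$ scales in the medium window, which is precisely what keeps the polylog exponent finite and independent of $W$. The $\epsilon_i$-property is therefore not a technical nuisance but the crucial ingredient that makes the embedding dimension-logarithmic in $nW$ while the distortion remains polylogarithmic in $n$ alone.
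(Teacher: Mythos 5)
Your proof is correct and follows essentially the same route as the paper's: the first inequality from monotonicity of $\ell_p$ norms, then linearity of expectation, then a three-way split of the scales $i$ into $R_i < d(u,v)$ (trivial bound $\Pr{\cut_{u,v}(S_i)} \le 1$, geometric sum), $\epsilon_i \le d(u,v) \le R_i$ (the $\beta_i$-bound, with $O(\log n)$ scales in this window), and $d(u,v) < \epsilon_i$ (probability zero). One small remark: the claim as printed in the paper says $O(\log^2 n)$, but that is a typo left over from an earlier version — the proof in the paper, like yours, correctly produces $O(\log^3 n)\cdot d(u,v)$ (since $\beta_i = O(\log^2 n/R_i)$ and there are $O(\log n)$ middle-regime scales), which is what is then used in Theorem~\ref{thm:static}.
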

\begin{proof}
    As before, first inequality follows from
    the fact that $\norm{.}_{p}$ is decreasing in $p$. We therefore focus on the second inequality.
    For convenience, we denote $d := d(u,v)$. Note that, for every scale $i$, we have
    \begin{align}
        \Ex{\norm{\rho_{i}(u) - \rho_{i}(v)}_1}
        &= \Ex{R_i \cdot \ind{\cut_{u, v}(S_i)}} \notag \\
        &= R_i \cdot \Pr{\cut_{u, v}(S_i)}. \label{eq:prob_clus_bound}
    \end{align}
    By definition, the $\ell_1$ norm will merely be a summation on the above over the scales of $R_i$:
    \begin{align*}
        \Ex{\norm{\rho(u) - \rho(v)}_1} &= \sum_{i=1}^{\log \Delta + 1}\Ex{| \rho_i(u)) - \rho_i(v)|} \\
        &= \sum_{i = 1}^{\log \Delta + 1} R_i \cdot \Pr{\cut_{u, v}(S_i)}.
    \end{align*}
    We proceed to bound this summation by bounding individual summands corresponding to manageable scales with the properties of our distance preserving cuts from Section~\ref{sec:dis_pres}.

    We divide the above sum into three parts depending on the relationship the parameters and $d$:
    \begin{itemize}
        \item Case 1: $d > R_i$.
        For these $i$,
        we simply bound
        $\Pr{\cut_{u, v}(S_i)}$ with $1$.
        \item Case 2: $d < \epsilon_i$.
        In this case,
        since the cut $S_i$ is distance preserving, we must have $\Pr{\cut_{u, v}(S_i)}=0$.
        \item Case 3: $\epsilon_i \le d \le R_i$.
        Since $S_i$ is distance preserving, we
        we know that
        $\Pr{\cut_{u, v}(S_i)}\le \beta_i d(u, v)$.
    \end{itemize}

    We proceed to combine the three cases. We first note that any $i$ in Case $1$ satisfies $R_i < d$
    which means
    $2^{i-2} < d$
    or equivalently
    $i < \log(d) + 2$.
    Therefore,
    \begin{align*}
        \sum_{i: R_i < d}
        R_i
        \le 
        \sum_{i: 2^{i-2} < d}
        2^{i-2}
        \le 
        \sum_{i=1}^{\ceil{\log d} + 2}
        2^{i-2}
        \le 2^{\ceil{\log d} +1}
        = O(d).
    \end{align*}

    As for Case 3,  
    \begin{align}
        \sum_{i: \epsilon_i \le d \le R_i}
        R_i \Pr{\cut_{u, v}(S_i)}
        \le 
        \sum_{i: \epsilon_i \le d \le R_i}
        \beta_i R_i d(u, v)
        = 
        d(u, v) O(\log^2 n)
        \cdot \abs{i: \epsilon_i \le d \le R_i},
        \label{eq:jul15_1548}
    \end{align}
    where fore the final inequality we have used the assumption $\beta_i \le O( \frac{\log^2 n}{R_i})$.

    We proceed to bound 
    $\abs{i: \epsilon_i \le d \le R_i}$.
    Let $c$ be the constant
    such that $\epsilon_i \ge c R_i/n$.
    In order for $d$ to satisfy
    $d\in [\epsilon_i, R_i]$, we must have
    $d \le 2^{i-2}$ which means
    $i \ge \log(d) + 2$ and
    $c 2^{i-2}/n \le d$ which means
    $i \le \log(nd/c) + 2$.
    It follows that
    \begin{align*}
        \abs{i: \epsilon_i \le d \le R_i}
        \le 
        (\log(nd/c) + 2)
         - (\log(d) + 2) + 1
         = \log(n) + \log(c^{-1}) + 1,
    \end{align*}
    which is $O(\log(n) + 1) = O(\log n)$. 
    Plugging this back in Equation~\eqref{eq:jul15_1548}
    we obtain
    \begin{align*}
        \sum_{i: \epsilon_i \le d \le R_i}
        R_i \Pr{\cut_{u, v}(S_i)}
        \le O(\log^3 n) d(u, v)
    \end{align*}

    Combining the above cases,
    and using Equation~\eqref{eq:prob_clus_bound}, we conclude that
    \begin{align*}
        \sum_{i=1}^{\log \Delta + 1}
        \abs{\rho_i(u) - \rho_i(v) }
        \le O(\log^3(n)) d(u, v) .
    \end{align*}
\end{proof}
Combining the two claims, we have the result of Theorem~\ref{thm:static}.
\end{proof}

\subsection{Dynamic Algorithm}
\begin{proof}[Proof of Theorem~\ref{thm:dynamic_if_cut}]
The decremental algorithm uses $\log(\Delta) + 1$ instances of algorithm $\mathcal{A}$ with the given choice of parameters which allows us to follow the argument provided for the static case in Section~\ref{sec:alg-description}. Let $R_{i} = 2^{i}$ for $1 \le i \le \log(\Delta)$. The $i$-th instance of the algorithm $\mathcal{A}$ maintains a $\left(O(\frac{\log^2{n}}{R_{i}}), R_{i}, \Omega(\frac{R_{i}}{n}) \right)$-distance preserving cut $S_{i}$. The final embedding $\rho : V \rightarrow \R^{\log{\Delta}}$ is the characteristic embedding of the cuts $(S_{1}, \ldots, S_{\log(\Delta) + 1})$ and parameters $\left(R_{1}, \ldots, R_{\log(\Delta) + 1} \right)$. 
Formally, we set the $i$-th coordinate of $\rho(v)$ to be $R_i$ if $v \in S_i$ and to $0$ otherwise.

Upon the arrival of a decremental change, the algorithm inputs this change into every instance of the dynamic decremental algorithm $\mathcal{A}$ it runs.
By assumption on the input algorithm $\mathcal{A}$, each run explicitly outputs changes to the structure of the $i$-th cut. Therefore, the main algorithm can adapt to these changes by appropriately changing the coordinates of vertices. Specifically,  if a vertex is removed from the $i$-th cut its coordinate changes from $0$ to $R_{i}$ and if it is added the opposite change takes place.
Since processing each such change takes constant time, and there are $t(m, n)$ updates in total by assumption on $\mathcal{A}$, the total time for the $i$-th instance is $t(m, n)$.
Therefore,
by charging the time used for these changes to the total update time of $O(\log{\Delta})$ instances of the algorithm $\mathcal{A}$, the total running time of the decremental algorithm is at most $O\left(t(m, n) \log{\Delta}\right)$.
As for the distortion, by assumption on algorithm $\mathcal{A}$, each run maintains a $\left(O(\frac{\log^2{n}}{R_{i}}), R_{i}, \Omega(\frac{R_{i}}{n})\right)$-distance preserving cut. Thus, the stretch of the maintained embedding $\rho$ follows from applying Theorem~\ref{thm:static}.
\end{proof}

\begin{proof}[Proof sketch of Lemma~\ref{lem:dec-decomp}]
At a high level, the dynamic algorithm presented in~\cite{forster2021dynamic} for maintaining a weak decomposition undergoing edge deletions relies on the concept of assigning a center to each cluster in the decomposition, an idea initially introduced in~\cite{chechik2020dynamic}.

This technique employs a dynamic Single Source Shortest Paths (SSSP) algorithm (specifically, the SSSP algorithm described in~\cite{henzinger2018decremental}) to monitor the distances from a center to every vertex within the cluster. 
Whenever an edge is deleted, the change is also updated to the SSSP algorithm. 
The SSSP algorithm then outputs vertices from the cluster whose distance to the center is greater than a certain threshold, and such that keeping these vertices within the cluster could potentially violate the requirements of a $(\beta, \delta)$-weak decomposition. 
To prevent this event, an appearance of such a vertex incurs either a re-centering of the current cluster or splitting the cluster into two or more disjoint new clusters. These operations ensure that eventually the diameter of each cluster satisfies the requirements of the $(\beta, R)$-weak decomposition. Crucially, the authors show that the number of times a cluster is re-centered can be bounded by $a\log{n}$, for some absolute constant $a$. On the other hand, the splitting procedure is designed in such a way that the size of a cluster that is split from the previous cluster shrinks by at least a factor of $2$. As a result, any vertex can be moved to a new cluster at most $O(\log{n})$ times. 

Now, the crucial observation that allows us to carry the approach to the case of edge weight increases is the fact that the SSSP algorithm of~\cite{henzinger2018decremental} also handles edge weight increases while preserving the same complexity bounds. The algorithm then is the same as in~\cite{forster2021dynamic} with the only change that, to monitor distance from a center of a cluster to every other vertex, we use the SSSP version that supports edge weights increases. As for the correctness part, we can carry the analysis from~\cite{forster2021dynamic} with minor changes. In Lemma~3.3, we show that the probability of being an inter-cluster edge is at most $\beta w_{k}(e)$, where $w_{k}(e)$ denotes the weight of edge $e$ in the current dynamic graph $G_{k}$ after $k$ updates. This, by the union bound, implies that for every pair of vertices $u,v \in G_{k}$ it holds $\Pr{C(u) \neq C(v)} \le \beta \cdot d_{G_{k}}(u, v)$. From the fact that weight updates only increase distances, we observe that Lemmas 3.5 and 3.6 from~\cite{forster2021dynamic} still hold (here, it is also important that updates are independent of the algorithm, which is also true in our model). This observation ultimately gives us that each cluster undergoes a center re-assignment at most $O(a \cdot \log{n})$ times. As a consequence, our derivation of total update time follows from the one in~\cite{forster2021dynamic} with the change that we account $O(Q \log n)$ time to parse all updates since each edge is present in at most $O(\log n)$ different SSSP instances in their algorithm.
\end{proof}

\begin{proof}[Proof of Lemma~\ref{lem:dyn-cut-lem}]
Let the graph $G'$
denote the graph obtained
by replacing all edges in $G$
with weight $\le \epsilon$ with
weight $0$.
It is easy to see that we can
maintain the graph $G$ with $\Theta(m + Q)$ total update time;
we start by preprocessing $G$ at the beginning of the algorithm. Next, whenever there is an update to an edge weight, we pass the update along to $G'$ if the new values of the weight is strictly larger than $\epsilon$ and ignore it if it is at most $\epsilon$. Note that in the latter case the old value is at most $\epsilon$ as well since we only allow edge-weight increases.

We then
use the  dynamic decremental algorithm from Lemma~\ref{lem:dec-decomp} for the choice of parameters $(\beta, R/2)$ and $a = \Theta(1)$ on the graph $G'$.
As a consequence of the initialization, a $\left(\beta, R/2
\right)$-weak decomposition of the preprocessed $G'$ is computed.
Denote the clusters of this decomposition $C_{1}, \ldots, C_{k}$. 
We claim that
the clusters satisfy the following properties.
\begin{itemize}
    \item The weak diamater of any $C_i$ in $G$ is at most $R$.
    \item For any $u, v$,
    the probability that they are in different clusters is at most $\beta d_{G}(u, v)$.
    \item Any $u, v$ such that $d_{G}(u, v) \le \epsilon$ are in the same cluster.
\end{itemize}
We start with the first property.
By definition of a $(\beta, R/2)$-weak decomposition,
if $d_{G'}(u, v) > R/2$ then $u$ and $v$ are in different clusters. We note however that
$d_{G}(u, v) \le d_{G'}(u, v) + \epsilon n$; this is because any path connecting $u$ and $v$ in $G'$ has at most $n$ edges that with replaced weights. Choosing $\epsilon \le R/2n$, it follows that $d_{G}(u, v) \le R$ for all $u$ and $v$ in the same cluster.
For the second property, we note that
the probability that $u$ and $v$ are in different clusters is at most $\beta d_{G'}(u, v)$, and therefore the property follows from $d_{G'}(u, v) \le d_{G}(u, v)$.
Finally, for the third property, it holds because any two such vertices will have distance $0$ in $G'$ and therefore are always in the same cluster.

After initialization, the algorithm samples $k$ uniform and independent values from $\{0,1\}$, one value for each cluster. 
Next, a cut of $G$ is created by grouping vertices from clusters that have been assigned value $1$, denoting this side of the cut $S$. By Lemma~\ref{lem:clu-to-cut}, we have that the cut $S$ is a $\left(\beta, R, \epsilon \right)$-distance preserving cut. We also note that the cut $S$ can be generated with $O(k) = O(n)$ additive overhead to the time complexity of the dynamic decremental algorithm from Lemma~\ref{lem:dec-decomp}.

Finally, we discuss the algorithm action upon a decremental update of an edge.
As mentioned, we maintain $G'$ by forwarding the update of an edge if necessary depending on whether the new weight is $\le \epsilon$ or $> \epsilon$.
The update to $G'$ is then in turn forwarded to the algorithm that dynamically maintains $\left(\beta, R/2\right)$-weak decomposition of $G'$. 
According to Lemma~\ref{lem:dec-decomp}, the only changes to the partitioning occur when splitting an existing cluster into two or more new clusters and members of the new clusters need to be explicitly listed. Let $\mathcal{C}' = \{C'_{1}, \ldots, C'_{j}\}$ be the set of these newly formed clusters. The main algorithm temporarily deletes the vertices belonging to clusters $\mathcal{C}'$ from the dynamic cut $S$ it maintains. Then, for each new cluster, it samples independently and uniformly a value from $\{0,1\}$. Again, vertices from clusters that have been sampled $1$ are added to those who already had $1$ (before the decremental update), while the ones who have just sampled $0$ are assigned to the other side of the cut. 
Since the decremental change is oblivious to the randomness of the algorithm, the random bits assigned to all clusters of the new partitioning obtained from the decremental update are distributed i.i.d. according to a Bernoulli distribution with parameter $\frac{1}{2}$, and by applying Lemma~\ref{lem:clu-to-cut} we obtain that the updated cut is also a $(\beta, R, \epsilon)$-distance preserving cut. As for the time complexity, we can observe that the number of changes made to the structure is linearly proportional to the number of vertices changing a cluster after an update. It, therefore, follows that the total update time needed for maintaining the cut is dominated by the time needed to report the changes in the dynamic partitioning, and according to Lemma~\ref{lem:dec-decomp}, is at most $O\left(m^{1 + o(1)}\log^2 W + Q\log n\right)$.
\end{proof}

\end{document}